\DeclareMathOperator{\sech}{sech}
\newtheorem{thm}{Theorem}[section]
\newtheorem{prop}[thm]{Proposition}
\def\i{{\rm i}}
\journal{}
\begin{document}
	
	\begin{frontmatter}
		\title{Soliton, breathers, positons and rogue waves for the vector complex modified Korteweg–de Vries equation}
		
		\author[1]{Yihang Liu}
		\author[2]{Yongshuai Zhang}
		\author[1]{\corref{cor1} Maohua Li}
		\address[1]{School of Mathematics and Statistics, Ningbo University,
			Ningbo, 315211, P.\ R.\ China}
		\address[2]{Department of Mathematics, Shaoxing University, Shaoxing, 312000, Zhejiang Province, P.\ R.\ China}
		\cortext[cor1]{Corresponding author: limaohua@nbu.edu.cn}
		
		\begin{abstract}
			This paper constructs the $N$-fold Darboux transformation (DT) for the vector complex modified Korteweg–de Vries (vcmKdV) equation and presents its determinant representation. Utilizing the DT and multi-fold eigenvalue degeneracy, we derive globally bounded solutions for the vcmKdV equation, including $N$-bright-bright-bright solitons, $N$-dark-bright-bright solitons, $N$-breathers, $N$-positon solutions, and $N$th-order rogue wave solutions." All these solutions are globally bounded. Graphical representations of bright-bright-bright and dark-bright-bright soliton solutions are provided, illustrating phenomena where periodic oscillatory waves coexist or interact with solitons. The collision scenarios of the two-bright-bright-bright solution have been investigated by using the asymptotic analysis. The bounded Akhmediev breather, the bounded breather with dark-bright soliton and breather-breather mixed waves are graphically shown. We give the graphs of the positon solution, the rogue wave and the rogue wave mixes with dark-bright solitons and breathers.
		\end{abstract}
		\begin{keyword}
			Vector complex modified Korteweg–de Vries equation, Darboux transformation, Soliton, Breather, Positon, Rogue wave
		\end{keyword}
	\end{frontmatter}
	
	\section{Introduction}
	\numberwithin{equation}{section}
	The Korteweg–de Vries (KdV) equation is widely recognized as a foundational and central model in nonlinear science and mathematical physics, and has the form
	\begin{equation}\label{a1}
		u_t+6uu_x+u_{xxx}=0,
	\end{equation}
	where $u=u(x,t)$ represents the wave field, and subscripts denote partial derivatives. This equation is completely integrable, up to now, there are many studies on this equation and the other nonlinear wave equations relate to KdV equation\cite{Pan1998,Clifford1967,Dejak2006,Ablositz1981,Stephen2011}. Among them, there emerges the modified Korteweg-de Vries (mKdV) equation
	\begin{equation}
		u_t+u_{xxx}+6u^{2}u_x=0\label{a2}.
	\end{equation}
	
	The mKdV equation has been extensively analyzed and extended to various contexts. Such as the  present multicomponent dusty plasma with ions, Cairns distributed electrons and immobile dusts\cite{Das2024}, the hyperbolic surfaces\cite{Schief1995}, etc\cite{GESZTESY1991,Salas2010,Nagatani1998,He2005,Liu2024,Zhang2020,Wu2017,Vladimir2023,Liu2021,Roy2022,Chen2023,Muhammad2024,Gao2021}. In recent years, significant attention has been directed toward studying this equation in the complex domain, and the complex modified Korteweg–de Vries (cmKdV) equation can be written as
	\begin{equation}
		u_t+u_{xxx}+6\left| u \right|^2u_x=0,\label{a3}
	\end{equation}
	where $u=u(x,t)$ is a complex function. The cmKdV equation has considerable significance in physic, and several solutions like rogue wave solution\cite{Zhaqilao2013}, Soliton molecules, rational positons\cite{Huang2021} and so on had been discussed\cite{Ismail2008,Uddin2009,Yuan2023,Tao2021,Wangwazwaz2022,Bai2024,Zhao2024,Song2024,Xu2025,Rao2024}. Furthermore, multi-component equations have attracted widespread interest, and there is vector modified Korteweg-de Vries (vmKdV) equation\cite{Liu2016}
	\begin{equation}
		q_t+q_{xxx}-3\delta \left( q_xq^Tq+qq^Tq_x \right) =0,\quad \delta =\pm 1\label{a4},
	\end{equation}  
	where $q$ is a $n\times 1\left( n\ge 2 \right)$ matrix real-valued function. The vmKdV has a $(n+1) \times (n+1)$~Lax pair. The vmKdV equation has more applications and can be applied to many more fields. Many solutions had been gotten by Riemann–Hilbert method, Fokas unified transform method and etc\cite{Wang2020,Fenchenko2018}.
	
	In this paper, we concentrate on the vector complex modified KdV (vcmKdV) equation
	\begin{equation}
		q_t+q_{xxx}+3\left( q_xq^{\dagger}q+qq^{\dagger}q_x \right) =0,\label{a5}
	\end{equation}
	where $q =(u,v,w)^{\mathsf{T}}$ is a $3 \times 1$ matrix complex-valued function, and the $\dagger$ denotes the Conjugate transpose.To the best of our knowledge, the three-component form of the vcmKdV equation has not been thoroughly investigated. In this paper we concentrate on the $N$-bright-bright-bright soliton, $N$-dark-bright-bright soliton, $N$-breathers, $N$-positon solutions and $N$-rogue wave solutions for Eq.\eqref{a5} by Darboux transformation and limit technique.
	
	The concept of the positon solution was introduced by Matveev in 1992 through the study of the Korteweg–de Vries (KdV) equation by applying a degeneration technique to the Wronskian formula solutions\cite{Matveev1992}. In subsequent studies, the investigation of positons has garnered considerable attention and has been extended to a wide range of equations\cite{Beutler,Stahlhofen1992,Shan2024,Hu2021,Rahman2025}. The positon can be characterized as ‘positons are long-range analogues of solitons and are slowly decreasing,  oscillating solutions’\cite{Matveev2002}. Positons are completely transparent regarding the collisional mechanism between soliton and positon, and positons keep unchanged after colliding with others. Furthermore, rogue waves have attracted a notable surge of research interest over the past decade. They are commonly described as waves that 'appear from nowhere and disappear without a trace' \cite{Akhmediev2009,Akhmediev2023},  and the amplitude is two or three times higher than the amplitude of background wave. The rogue waves are important phenomena and have lots of applications to many other fields,  like superthermal electrons\cite{Kaur2022}, encompassing oceanography\cite{Da2020}, nonlinear optics\cite{Qi2025}, Bose-Einstein condensation\cite{Kengne2023}, machine learning\cite{Haefner2023}, superfluid helium\cite{Efimov2010}, etc. Moreover, a variety of mathematical models have been developed to study rogue waves, like the complex mKdV equation\cite{Zhen2023}, sine-Gordon equation\cite{Li2020}, variable-coefficients high-order nonlinear Schr\"odinger equation\cite{Ying2021}, Ito's system\cite{Wang2024}.
	
	This article is structured into the following sections. In Section \ref{1}, we presented the Lax pair of the Eq.\eqref{a5} and derived both the first-order Darboux transformation and the determinant representation of the $N$-th order Darboux transformation. In Section \ref{2}, we obtain the bounded $N$-bright-bright-bright soliton solution and $N$-dark-bright-bright soliton solution with the vanishing background and the nonvanishing background, we find that there is a periodic oscillatory waves coexist or interact with the soliton solutions. Also we perform the asymptotic analysis for the $N$-bright-bright-bright soliton solution. In Section \ref{3}, we obtain the bounded $N$-breather solution with completely nonvanishing background, we obtain the bounded Akhmediev breather and collapsed Akhmediev breather by choosing special parameters, and we find that with different parameters there are three-component solution including a bounded Akhmediev breather with two dark-bright soliton solutions, a bounded Akhmediev brether with two breathers and a bounded Akhmediev brether with two soliton-breather-breather solutions. The graphs of the breather and mix breather-soliton solutions are presented. In Section \ref{4}, we construct the the determinant representation of $N$th-order positon soluton and rogue wave solution by using Darboux transformation with multi-fold degeneration of the eigenvalues. The graphs of the bounded positon solutions and rugue wave solutions are presented. And show the bounded rogue waves mix with soliton solutions and breathers. In Section \ref{5}, we summarize the conclusions.

	\section{Darboux transformation of the vcmKdV equation} \label{1}
	\numberwithin{equation}{section}
	\subsection{One-fold Darboux transformation}
	
	The Lax pair of the vector complex modified KdV equation \eqref{a5} can be given as
	
	\begin{align}
		&\Psi_x=U\Psi,U=-\i\zeta J+Q,\label{b1}\\
		&\Psi_t=V\Psi,V=-4\i\zeta^3 J+4\zeta^2 Q-2\i\zeta V_1+V_2,\label{b2}
	\end{align}
	with
	\begin{equation}\label{b3}
	    V_1=\left( Q^2+Q_x \right) J,\quad
	    V_2=-Q_{xx}+2Q^3+Q_xQ-QQ_x,
	\end{equation}
	where 
	\begin{align*}
		Q=\left( \begin{matrix}
			 0&		u&		v&		w\\
			-u^{*}&		0&		0&		0\\
			-v^{*}&		0&		0&		0\\
			-w^{*}&		0&		0&		0\\
		\end{matrix} \right), &&
		J=\left( \begin{matrix}
			1&		0&		0&		0\\
			0&		-1&		0&		0\\
			0&		0&		-1&		0\\
			0&		0&		0&		-1\\
		\end{matrix} \right) .
	\end{align*}
	 $\Psi = (\phi(x,t), \varphi(x,t), \psi(x,t), \chi(x,t))^T$ is the eigenfunction and $\zeta$ is the eigenvalue. We can find that the matrix $U$ and $V$ have the following property
	 \begin{align*}
	 	&U^{\dag}\left( x,t,\zeta ^* \right) =-U\left( x,t,\zeta \right),\\
	 	&V^{\dag}\left( x,t,\zeta ^* \right) =-V\left( x,t,\zeta \right).
	 \end{align*}
	So that the $\Psi ^{\dag}\left( x,t,\zeta ^* \right) $ satisfies the adjoint equations
	\begin{align*}
		&\Psi _{\text{x}}^{-1}=-\Psi ^{-1}U,\\
		&\Psi _{\text{t}}^{-1}=-\Psi ^{-1}V.
	\end{align*}
    The one-fold Darboux transformation can then be derived using the loop group method\cite{Terng2000,Wang2022}.
    \begin{prop}
    	The one-fold DT:
    	\begin{align}\label{b4}
    		\Psi[1] = T[1] \Psi, \quad
    		T[1] = I - \frac{\zeta_1 - \zeta_1^*}{\zeta - \zeta_1^*}\frac{\Psi_1 \Psi_1^{\dagger}}{\Psi_1^{\dagger} \Psi_1},
    	\end{align}
    	where the $\Psi_1 = (\phi_1(x,t), \varphi_1(x,t), \psi_1(x,t), \chi_1(x,t))^T$ is a solution of the \eqref{b1} and \eqref{b2} at $\zeta=\zeta_1$, and I is a 4$\times$4 identidy matrix. After the transformation, we can get
    	\begin{align}
    		u[1] &= u - 2\i (\zeta_1 - \zeta_1^*) \frac{\phi_1 \varphi_1^*}{\Psi_1^\dagger \Psi_1}, \label{b5}\\
    		v[1] &= u - 2\i (\zeta_1 - \zeta_1^*) \frac{\phi_1 \psi_1^*}{\Psi_1^\dagger \Psi_1},\label{b6} \\
    		w[1] &= u - 2\i (\zeta_1 - \zeta_1^*) \frac{\phi_1 \chi_1^*}{\Psi_1^\dagger \Psi_1}.\label{b7}
    	\end{align}
    \end{prop}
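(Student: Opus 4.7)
My plan is to interpret $T[1]$ as a gauge transformation of loop-group type and verify the two compatibility identities
\begin{equation*}
T[1]_x + T[1]\,U = U[1]\,T[1], \qquad T[1]_t + T[1]\,V = V[1]\,T[1],
\end{equation*}
where $U[1]$ and $V[1]$ are obtained from $U,V$ by replacing $Q$ with a new matrix $Q[1]$ of the same block‐off‐diagonal Hermitian structure. First I would set $P := \Psi_1\Psi_1^{\dagger}/(\Psi_1^{\dagger}\Psi_1)$, noting that $P=P^{\dagger}=P^{2}$ is the orthogonal rank‑one projector onto $\operatorname{span}(\Psi_1)$, and rewrite $T[1]=I-(\zeta_1-\zeta_1^{*})(\zeta-\zeta_1^{*})^{-1}P$. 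The key consequences to record are $T[1](\zeta_1)\Psi_1 = (I-P)\Psi_1 = 0$ and, dually, $\Psi_1^{\dagger}\,T[1](\zeta_1^{*})^{-1}=0$, which follow from the adjoint symmetry $U^{\dagger}(x,t,\zeta^{*})=-U(x,t,\zeta)$ stated in the excerpt.

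For the spatial equation, I would expand $T[1]$, $TU$ and $U[1]T$ as Laurent series in $\zeta$ near infinity. The poles of both sides at $\zeta=\zeta_1^{*}$ cancel automatically because $T[1](\zeta_1)\Psi_1=0$ differentiates to give $T[1]_x(\zeta_1)\Psi_1 = -T[1](\zeta_1)U(\zeta_1)\Psi_1$, which is exactly the residue condition; and the $O(\zeta)$ term reproduces $-\i\zeta J$ on both sides. Matching the $O(1)$ coefficients then yields
\begin{equation*}
Q[1] \;=\; Q + \i(\zeta_1-\zeta_1^{*})\bigl[P,\,J\bigr].
\end{equation*}
Because $J=\operatorname{diag}(1,-1,-1,-1)$, the commutator $[P,J]$ has entries $[P,J]_{1k}=-2P_{1k}$ for $k=2,3,4$ (and Hermitian conjugates), so reading off the $(1,2),(1,3),(1,4)$ entries of $Q[1]$ with $P_{1k}=\phi_1(\Psi_1)_{k}^{*}/(\Psi_1^{\dagger}\Psi_1)$ immediately produces the three update formulas \eqref{b5}--\eqref{b7}.

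For the temporal equation, I would repeat the Laurent expansion of $T[1]_t+TV=V[1]T$ at $\zeta=\infty$. Because $V$ is cubic in $\zeta$ one must check the $O(\zeta^{2})$, $O(\zeta)$ and $O(1)$ coefficients in turn; the $O(\zeta^{2})$ level reproduces the update $Q\mapsto Q[1]$ already obtained, and the lower-order identities reduce to $x$-derivatives of that same update together with the definitions of $V_1,V_2$ in \eqref{b3}. Thus the temporal identity is forced by the spatial one together with the compatibility $U_t-V_x+[U,V]=0$ enjoyed by \eqref{a5}.

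The one routine-but-delicate step that will dominate the work is verifying that the new potential $Q[1]$ indeed has the same structure as $Q$, i.e.\ that $Q[1]^{\dagger}=-Q[1]$ and that its only nonzero entries sit in the first row and first column. Hermiticity follows from $P^{\dagger}=P$ together with $\bigl(\i(\zeta_1-\zeta_1^{*})\bigr)^{*}=\i(\zeta_1-\zeta_1^{*})$. The vanishing of the $(i,j)$ entries with $i,j\in\{2,3,4\}$ follows from $[P,J]_{ij}=P_{ij}(J_{jj}-J_{ii})=0$ in that block, since all three diagonal entries of $J$ in that block equal $-1$. This structural check is what I would watch most carefully, since it is what guarantees that $Q[1]$ encodes a genuine solution $(u[1],v[1],w[1])^{\mathsf T}$ of the vcmKdV equation rather than a new potential living in a larger class.
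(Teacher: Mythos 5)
Your proposal is correct and follows essentially the same route the paper relies on: the paper states the one-fold DT by appeal to the loop-group method and, in its proof of the $N$-fold case, uses exactly your mechanism --- the gauge relation $(T)_x+TU=U[1]T$, pole cancellation via the kernel conditions $T(\zeta_1)\Psi_1=0$ and $\Psi_1^{\dagger}T^{-1}(\zeta_1^{*})=0$, and matching the $\zeta^{0}$ coefficient at $\zeta\to\infty$ to obtain $Q[1]=Q+\i(\zeta_1-\zeta_1^{*})[P,J]$, whose $(1,k)$ entries give \eqref{b5}--\eqref{b7} (and, incidentally, show that the right-hand sides of \eqref{b6}--\eqref{b7} should read $v-\cdots$ and $w-\cdots$ rather than $u-\cdots$). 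Your structural check that $Q[1]$ stays anti-Hermitian with support only in the first row and column is a worthwhile addition that the paper leaves implicit.
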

    Then, we set the $\Psi_i = \big( \phi_i(x,t), \varphi_i(x,t), \psi_i(x,t), \chi_i(x,t) \big)^T \ (i=1,\ldots,N)$ as the $N$ solutions of the linear matrix eigenvalue problem \eqref{b1} and \eqref{b2} when $\zeta=\zeta_i$. Based on this, the expression for $N$-fold DT can be derived. 
    \begin{prop}
    	The N-fold DT can be expressed as
    	\begin{align}\label{b8}
    		T_{N} = I - X M^{-1} S^{-1} X^\dagger,
    	\end{align}
    	where
        \begin{align*}
	       X &= (\Psi_1, \Psi_2, \ldots, \Psi_N), \\
	       S &= \mathrm{diag}\left\{ \zeta - \zeta_1^*, \zeta - \zeta_2^*, \ldots, \zeta - \zeta_N^* \right\},
        \end{align*}
    	\begin{equation}\label{b9}
    			M = \begin{pmatrix}
    				\dfrac{1}{\zeta_1 - \zeta_1^*} \Psi_1^\dagger \Psi_1 &
    				\dfrac{1}{\zeta_2 - \zeta_1^*} \Psi_1^\dagger \Psi_2 &
    				\cdots &
    				\dfrac{1}{\zeta_N - \zeta_1^*} \Psi_1^\dagger \Psi_N \\
    				
    				\dfrac{1}{\zeta_1 - \zeta_2^*} \Psi_2^\dagger \Psi_1 &
    				\dfrac{1}{\zeta_2 - \zeta_2^*} \Psi_2^\dagger \Psi_2 &
    				\cdots &
    				\dfrac{1}{\zeta_N - \zeta_2^*} \Psi_2^\dagger \Psi_N \\
    				
    				\vdots &
    				\vdots &
    				\ddots &
    				\vdots \\
    				
    				\dfrac{1}{\zeta_1 - \zeta_N^*} \Psi_N^\dagger \Psi_1 &
    				\dfrac{1}{\zeta_2 - \zeta_N^*} \Psi_N^\dagger \Psi_2 &
    				\cdots &
    				\dfrac{1}{\zeta_N - \zeta_N^*} \Psi_N^\dagger \Psi_N
    			    \end{pmatrix}.
    	\end{equation}\\
    	Concomitantly, the potentials after the transform can be formulated as
    	\begin{align}
    		u_N &= u + 2\i \frac{\det(M_1)}{\det(M)},\label{b10} \\
    		v_N &= v + 2\i \frac{\det(M_2)}{\det(M)},\label{b11} \\
    		w_N &= w + 2\i \frac{\det(M_3)}{\det(M)},\label{b12}
    	\end{align}\\
    	where
    	\begin{equation*}
    		M_1 = \begin{pmatrix}
    			\dfrac{1}{\zeta_1 - \zeta_1^*} \Psi_1^\dagger \Psi_1 &
    			\dfrac{1}{\zeta_2 - \zeta_1^*} \Psi_1^\dagger \Psi_2 &
    			\cdots &
    			\dfrac{1}{\zeta_N - \zeta_1^*} \Psi_1^\dagger \Psi_N &
    			\varphi_1^* \\
    			
    			\dfrac{1}{\zeta_1 - \zeta_2^*} \Psi_2^\dagger \Psi_1 &
    			\dfrac{1}{\zeta_2 - \zeta_2^*} \Psi_2^\dagger \Psi_2 &
    			\cdots &
    			\dfrac{1}{\zeta_N - \zeta_2^*} \Psi_2^\dagger \Psi_N &
    			\varphi_2^* \\
    			
    			\vdots &
    			\vdots &
    			\ddots &
    			\vdots &
    			\vdots \\
    			
    			\dfrac{1}{\zeta_1 - \zeta_N^*} \Psi_N^\dagger \Psi_1 &
    			\dfrac{1}{\zeta_2 - \zeta_N^*} \Psi_N^\dagger \Psi_2 &
    			\cdots &
    			\dfrac{1}{\zeta_N - \zeta_N^*} \Psi_N^\dagger \Psi_N &
    			\varphi_N^* \\
    			
    			\phi_1 &
    			\phi_2 &
    			\cdots &
    			\phi_N &
    			0
    		\end{pmatrix},
    	\end{equation*}
    	\begin{equation*}
    		M_2 = \begin{pmatrix}
    			\dfrac{1}{\zeta_1 - \zeta_1^*} \Psi_1^\dagger \Psi_1 &
    			\dfrac{1}{\zeta_2 - \zeta_1^*} \Psi_1^\dagger \Psi_2 &
    			\cdots &
    			\dfrac{1}{\zeta_N - \zeta_1^*} \Psi_1^\dagger \Psi_N &
    			\psi_1^* \\
    			
    			\dfrac{1}{\zeta_1 - \zeta_2^*} \Psi_2^\dagger \Psi_1 &
    			\dfrac{1}{\zeta_2 - \zeta_2^*} \Psi_2^\dagger \Psi_2 &
    			\cdots &
    			\dfrac{1}{\zeta_N - \zeta_2^*} \Psi_2^\dagger \Psi_N &
    			\psi_2^* \\
    			
    			\vdots &
    			\vdots &
    			\ddots &
    			\vdots &
    			\vdots \\
    			
    			\dfrac{1}{\zeta_1 - \zeta_N^*} \Psi_N^\dagger \Psi_1 &
    			\dfrac{1}{\zeta_2 - \zeta_N^*} \Psi_N^\dagger \Psi_2 &
    			\cdots &
    			\dfrac{1}{\zeta_N - \zeta_N^*} \Psi_N^\dagger \Psi_N &
    			\psi_N^* \\
    			
    			\phi_1 &
    			\phi_2 &
    			\cdots &
    			\phi_N &
    			0
    		\end{pmatrix},
    	\end{equation*}
    	\begin{equation*}
    		M_3 = \begin{pmatrix}
    			\dfrac{1}{\zeta_1 - \zeta_1^*} \Psi_1^\dagger \Psi_1 &
    			\dfrac{1}{\zeta_2 - \zeta_1^*} \Psi_1^\dagger \Psi_2 &
    			\cdots &
    			\dfrac{1}{\zeta_N - \zeta_1^*} \Psi_1^\dagger \Psi_N &
    			\chi_1^* \\
    			
    			\dfrac{1}{\zeta_1 - \zeta_2^*} \Psi_2^\dagger \Psi_1 &
    			\dfrac{1}{\zeta_2 - \zeta_2^*} \Psi_2^\dagger \Psi_2 &
    			\cdots &
    			\dfrac{1}{\zeta_N - \zeta_2^*} \Psi_2^\dagger \Psi_N &
    			\chi_2^* \\
    			
    			\vdots &
    			\vdots &
    			\ddots &
    			\vdots &
    			\vdots \\
    			
    			\dfrac{1}{\zeta_1 - \zeta_N^*} \Psi_N^\dagger \Psi_1 &
    			\dfrac{1}{\zeta_2 - \zeta_N^*} \Psi_N^\dagger \Psi_2 &
    			\cdots &
    			\dfrac{1}{\zeta_N - \zeta_N^*} \Psi_N^\dagger \Psi_N &
    			\chi_N^* \\
    			
    			\phi_1 &
    			\phi_2 &
    			\cdots &
    			\phi_N &
    			0
    		\end{pmatrix}.
    	\end{equation*}
    \end{prop}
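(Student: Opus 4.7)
My plan is to pin down $T_N$ by its analytic data (pole locations, residue structure, kernels, and normalization at infinity), and then read off the potential update from its $\zeta^{-1}$-expansion at infinity. I begin with the ansatz
$$T_N(\zeta) = I + \sum_{l=1}^{N} \frac{Y_l\, \Psi_l^{\dagger}}{\zeta - \zeta_l^*},$$
with unknown column vectors $Y_l$. This form is dictated by the one-fold DT of Proposition~2.1: iterating $N$ such factors produces simple poles at $\zeta_l^*$ with rank-one residues whose right factor is $\Psi_l^{\dagger}$, and is normalized so that $T_N(\infty)=I$.

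Next, I impose the kernel conditions $T_N(\zeta_k)\Psi_k = 0$ for $k = 1, \ldots, N$, which express that $T_N$ annihilates each seed eigenfunction at its eigenvalue. This yields the linear system
$$\Psi_k + \sum_{l=1}^{N} \frac{Y_l\, \Psi_l^{\dagger}\Psi_k}{\zeta_k - \zeta_l^*} = 0, \qquad k = 1, \ldots, N,$$
which in block form is $YM = -X$ with $M$ the matrix in \eqref{b9}. Inverting gives $Y = -XM^{-1}$; substituting back and collapsing the pole sum into $S^{-1}$ recovers the compact expression \eqref{b8}.

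To certify that $T_N$ is a legitimate Darboux matrix for the Lax pair, I use an iteration argument: setting $T_N = T[N]T[N-1]\cdots T[1]$, where each $T[k]$ is the one-fold DT of Proposition~2.1 applied at $\zeta_k$ with the successively transformed seed $\Psi_k[k-1]$, preserves gauge covariance of the Lax pair at every step. Both this product and the closed form share the same simple poles at $\zeta_k^*$, the same rank-one residues annihilating the $\Psi_k$, and the same identity normalization at infinity, so by uniqueness of a meromorphic matrix function with prescribed principal part and limit, the two coincide.

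Finally, I extract the potentials. Expanding $T_N = I + T_N^{(1)}/\zeta + O(\zeta^{-2})$ with $T_N^{(1)} = -XM^{-1}X^{\dagger}$ and matching the $O(\zeta^{0})$ term in $T_{N,x} = U[N]T_N - T_N U$ yields $Q[N] - Q = \i[J, T_N^{(1)}]$. For $J = \mathrm{diag}(1,-1,-1,-1)$ the $(1,j)$ entries acquire a factor of $2$, giving $u_N - u = -2\i\, \phi M^{-1}\varphi^{*}$ and analogously for $v_N, w_N$, where $\phi = (\phi_1,\ldots,\phi_N)$ and $\varphi^{*} = (\varphi_1^{*},\ldots,\varphi_N^{*})^{T}$. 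The Schur-complement identity
$$\det\!\begin{pmatrix} M & c \\ r & 0 \end{pmatrix} = -\det(M)\, r M^{-1} c$$
then converts these bilinear forms into the determinantal ratios \eqref{b10}--\eqref{b12}. The main obstacle is the uniqueness/iteration bookkeeping: rigorously matching the product of one-fold DTs with the closed form requires tracking how each block of $M$ updates under the previous transformations, while invertibility of $M$ (which has a Cauchy-congruent structure and is generically non-singular for distinct $\zeta_k$) and the sign conventions in the Schur complement are routine but must be kept consistent throughout.
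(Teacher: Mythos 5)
Your proposal is correct and follows essentially the same route as the paper: both posit the partial-fraction ansatz $T_N=I-\sum_i D_i/(\zeta-\zeta_i^*)$ with rank-one residues whose right factor is $\Psi_i^\dagger$, reduce to the linear system governed by the Cauchy-type matrix $M$, and extract the potentials from the $\zeta^{-1}$ coefficient of $T_N$ via the $O(\zeta^0)$ term of $(T_N)_x+T_NU=U[N]T_N$ together with the bordered-determinant (Schur complement) identity. The only cosmetic difference is that you obtain the linear system directly from the kernel conditions $T_N(\zeta_k)\Psi_k=0$, whereas the paper derives the equivalent conditions from the vanishing of the residues of $T_NT_N^{-1}$ at $\zeta=\zeta_j$ before identifying the kernel vectors with the $\Psi_j$.
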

    \begin{proof}
    	
    	By using the One-fold DT \eqref{b4}, we can get
    	\begin{flalign}\label{b13}
    		T_N &=T[N] T[N-1] \cdots T[1] \notag&\\
    		&= \left( I - \frac{\zeta_N - \zeta_N^*}{\zeta - \zeta_N^*} \frac{\Psi_N \Psi_N^\dagger}{\Psi_N^\dagger \Psi_N} \right) \left( I - \frac{\zeta_{N-1} - \zeta_{N-1}^*}{\zeta - \zeta_{N-1}^*} \frac{\Psi_{N-1} \Psi_{N-1}^\dagger}{\Psi_{N-1}^\dagger \Psi_{N-1}} \right) \cdots 
    		\left( I - \frac{\zeta_1 - \zeta_1^*}{\zeta - \zeta_1^*} \frac{\Psi_1 \Psi_1^\dagger}{\Psi_1^\dagger \Psi_1} \right) \notag &\\ 
    		&= I - \sum_{i=1}^N \frac{D_i}{\zeta - \zeta_i^*}, &
    	\end{flalign}
    	where $D_i$ is a uncertain matrix.
    	
    	We can use the Residue Theorem to obtain the expression $D_i$.
    	\begin{align*}
    		D_i = T_N \Big|_{Res\,\zeta = \zeta_i^*} = T[N] \cdots T[i+1] \cdot \frac{\zeta_i - \zeta_i^*}{\Psi_i^\dagger \Psi_i} \Psi_i \Psi_i^\dagger \cdot T[i-1] \cdots T[1] \Big|_{\zeta = \zeta_i^*}.
    	\end{align*}
    	Set $\left| z_i \right\rangle = \left. T[N] \cdots T[i+1] \cdot \frac{\zeta_i - \zeta_i^*}{\Psi_i^\dagger \Psi_i} \Psi_i \right|_{\zeta = \zeta_i^*}$, $\left\langle y_i \right|^{T} = \left. \Psi_i^\dagger \cdot T[i-1] \cdots T[1] \right|_{\zeta = \zeta_i^*}$. The $\left| z_i \right\rangle$ and $\left\langle y_i \right|$ are column vectors. Then we can get
    	\begin{align*}
    		T_N = I - \sum_{i=1}^N \frac{\left| z_i \right\rangle \left\langle y_i \right|^{T}}{\zeta - \zeta_i^*},&\quad T_N^{-1} = I - \sum_{i=1}^N \frac{\left\langle y_i \right|^* \left| z_i \right\rangle^\dagger}{\zeta - \zeta_i}.
    	\end{align*}
    	Through the Residue Theorem we can get
    	\begin{align}\label{b14}
    		&Res_{\zeta = \zeta_j} \left[ I - \sum_{i=1}^N \frac{\left| z_i \right\rangle \left\langle y_i \right|^{T}}{\zeta - \zeta_i^*} \right] \left[ I - \sum_{i=1}^N \frac{\left\langle y_i \right|^* \left| z_i \right\rangle^\dagger}{\zeta - \zeta_i} \right] = -\left[ I - \sum_{i=1}^N \frac{\left| z_i \right\rangle \left\langle y_i \right|^{T}}{\zeta_j - \zeta_i^*} \right] \left\langle y_j \right|^* \left| z_j \right\rangle^\dagger = 0.
    	\end{align}
    	Then we can get the correlation of the $\left| z_i \right\rangle$ and $\left\langle y_i \right|$.
    	\begin{align*}
    		\left\langle y_j \right|^* = \sum_{i=1}^N \frac{\left| z_i \right\rangle \left\langle y_i \right|^{T} \left\langle y_j \right|^*}{\zeta_j - \zeta_i^*},
    	\end{align*}
    	which implies
    	\begin{align*}
    		\left( \left\langle y_1 \right|, \left\langle y_2 \right|, \dots, \left\langle y_N \right| \right)^* 
    		&= \left( \left| z_1 \right\rangle, \left| z_2 \right\rangle, \dots, \left| z_N \right\rangle \right) 
    		\begin{pmatrix}
    			\dfrac{1}{\zeta_1 - \zeta_1^*} \left\langle y_1 \right|^{T} \left\langle y_1 \right|^* 
    			& \dfrac{1}{\zeta_2 - \zeta_1^*} \left\langle y_1 \right|^{T} \left\langle y_2 \right|^* 
    			& \cdots 
    			& \dfrac{1}{\zeta_N - \zeta_1^*} \left\langle y_1 \right|^{T} \left\langle y_N \right|^* \\[2ex]
    			\dfrac{1}{\zeta_1 - \zeta_2^*} \left\langle y_2 \right|^{T} \left\langle y_1 \right|^* 
    			& \dfrac{1}{\zeta_2 - \zeta_2^*} \left\langle y_2\right|^{T} \left\langle y_2 \right|^* 
    			& \cdots 
    			& \dfrac{1}{\zeta_N - \zeta_2^*} \left\langle y_2\right|^{T} \left\langle y_N \right|^* \\[2ex]
    			\vdots 
    			& \vdots 
    			& \ddots 
    			& \vdots \\[2ex]
    			\dfrac{1}{\zeta_1 - \zeta_N^*} \left\langle y_N\right|^{T} \left\langle y_1 \right|^* 
    			& \dfrac{1}{\zeta_2 - \zeta_N^*} \left\langle y_N\right|^{T} \left\langle y_2 \right|^* 
    			& \cdots 
    			& \dfrac{1}{\zeta_N - \zeta_N^*} \left\langle y_N\right|^{T} \left\langle y_N \right|^*
    		\end{pmatrix}\\
    		&\triangleq \left( \left| z_1 \right\rangle, \left| z_2 \right\rangle, \dots, \left| z_N \right\rangle \right)M
    	\end{align*}
    	And by \eqref{b14} we have
    	\begin{equation*}
    		T_N \left< y_j \right|^T \left| z_j \right> ^{\dagger} \bigg|_{\zeta = \zeta_j} = 0.
    	\end{equation*}
    	So we can set $\left< y_j \right|^T = \Psi_j = \Psi(x, t, \zeta_j)$ without loss of generality. Then we can obtain
    	\begin{align*}
    		\left( |z_1\rangle, |z_2\rangle, \ldots, |z_N\rangle \right) = \left( \Psi_1, \Psi_2, \ldots, \Psi_N \right) M^{-1}.
    	\end{align*}
    	Then we can get the express of $N$-fold DT as \eqref{b8}. About the potentials after the $N$-times transforms, we can consider the equation
    	\begin{align}\label{b15}
    		(T_N)_x + T_N U = U[N] T_N.
    	\end{align}
    	And the different between the matrix $U$ and the $U[N]$ are only the change of potentials. By equation \eqref{b15} we can get
    	\begin{align*}
    		-\sum_{i=1}^{N} \frac{\mathrm{D}_{ix}}{\zeta - \zeta_i^*} 
    		+ \left( I - \sum_{i=1}^{N} \frac{\mathrm{D}_i}{\zeta - \zeta_i^*} \right) (-\i\zeta J - Q) 
    		= (-\i\zeta J - Q[N]) \left( I - \sum_{i=1}^{N} \frac{\mathrm{D}_i}{\zeta - \zeta_i^*} \right).
    	\end{align*}
    	Taking the limit as $\zeta \rightarrow \infty$ and considering the coefficient of $\zeta ^0$, we have
    	\begin{align*}
    		Q[N] = Q + \i \sum_{i=1}^N [D_i, J].
    	\end{align*}
    	This yields
    	\begin{align*}
    		u[N] = u - 2\i \left( \sum_{j=1}^N D_j \right)_{12},\quad
    		v[N] = v - 2\i \left( \sum_{j=1}^N D_j \right)_{13},\quad
    		w[N] = w - 2\i \left( \sum_{j=1}^N D_j \right)_{14}.
    	\end{align*}
    	By calculating we can get 
    	\begin{align*}
    		\left( \sum_{i=1}^N D_i \right)_{\!12} 
    		= \dfrac{
    			\displaystyle \sum_{j=1}^N \sum_{i=1}^N \phi_i M_{ij}^* \varphi_j^*
    		}{
    			\det(M)
    		}
    		= -\dfrac{
    			\det(M_1)
    		}{
    			\det(M)
    		}.
    	\end{align*}
    	By similarly calculations, we can obtain the results for equations \eqref{b10}-\eqref{b12}.
    \end{proof}
    
    \section{Soliton solution}\label{2}
    \numberwithin{equation}{section}
    \subsection{Bright-bright-bright soliton solution}
    In this section, we discuss the $N$-fold DT applied to zero seed, with the $u= 0, v=0, w=0$, the solution of the Lax pairs at $\zeta=\zeta_k$ is given by
    \begin{equation}\label{c1}
    	\Psi_k=\left(
    	\begin{array}{c}
    		e^{ -\i \zeta_k (4t \zeta_k^2 + x) } \\
    		c_1 e^{ \i \zeta_k (4t \zeta_k^2 + x) } \\
    		c_2 {e^{ \i \zeta_k (4t \zeta_k^2 + x) }} \\
    		c_3 e^{\i \zeta_k (4t \zeta_k^2 + x) }
    	\end{array}
    	\right).
    \end{equation}
    The $c_1$, $c_2$ and the $c_3$ are the arbitrary constants, through the application of equations \eqref{b10}-\eqref{b12}. We have
    \begin{flalign}
    	u_N &= 2\i \frac{\det(M_1)}{\det(M)},\label{c2} \\
    	v_N &= 2\i \frac{\det(M_2)}{\det(M)}, \label{c3}\\
    	w_N &= 2\i \frac{\det(M_3)}{\det(M)},\label{c4}
    \end{flalign}
    where
    \begin{align*}
    	M = \bigl( M_{ij} \bigr)_{1 \le i,j \le N},\quad
    	M_1 = \begin{pmatrix}
    		M & Y_1 \\
    		X_1 & 0 
    	\end{pmatrix},\quad 
    	M_2 = \begin{pmatrix}
    		M & Y_2 \\
    		X_1 & 0 
    	\end{pmatrix},\quad 
    	M_3 = \begin{pmatrix}
    		M & Y_3 \\
    		X_1 & 0 
    	\end{pmatrix},
    \end{align*}
    and
    \begin{flalign*}
    	&M_{ij}=\frac{
    		\bigl( c_1^2 + c_2^2 + c_3^2 \bigr)
    		\left[
    		e^{4\i\left( (\zeta_j^2 + \zeta_j\zeta_i^* + (\zeta_i^*)^2 )t + \frac{x}{4} \right)(\zeta_j - \zeta_i^*)} 
    		+ e^{-4\i\left( (\zeta_j^2 + \zeta_j\zeta_i^* + (\zeta_i^*)^2 )t + \frac{x}{4} \right)(\zeta_j - \zeta_i^*)}
    		\right]
    	}{\zeta_j - \zeta_i^*},&\\
    	&X_1 = \begin{pmatrix}
    		e^{-\i\zeta_1 (4t\zeta_1 + x)} 
    		& e^{-\i\zeta_2 (4t\zeta_2 + x)} 
    		& \cdots 
    		& e^{-\i\zeta_N (4t\zeta_N + x)}
    	\end{pmatrix},&\\
    	&Y_1 = \begin{pmatrix}
    		c_1 e^{-\i\zeta_1^* (4t\zeta_1^* + x)} 
    		&c_1 e^{-\i\zeta_2^* (4t\zeta_2^* + x)}
    		&\cdots 
    		&c_1 e^{-\i\zeta_N^* (4t\zeta_N^* + x)}
    	\end{pmatrix}^{T},&\\
    	&Y_2 = \begin{pmatrix}
    		c_2 e^{-\i\zeta_1^* (4t\zeta_1^* + x)} 
    		&c_2 e^{-\i\zeta_2^* (4t\zeta_2^* + x)} 
    		&\cdots 
    		&c_2 e^{-\i\zeta_N^* (4t\zeta_N^* + x)}
    	\end{pmatrix}^{T},&\\
    	&Y_3 = \begin{pmatrix}
    		c_3 e^{-\i\zeta_1^* (4t\zeta_1^* + x)} 
    		&c_3 e^{-\i\zeta_2^* (4t\zeta_2^* + x)} 
    		&\cdots 
    		&c_3 e^{-\i\zeta_N^* (4t\zeta_N^* + x)}
    	\end{pmatrix}^{T}.
    \end{flalign*}
    When $N$=1, the solutions of vcmKdV generated by one-fold DT can be expressed as
    \begin{align}
    	&u[1] = \frac{-\i c_1 (\zeta_1 - \zeta_1^*)}{\sqrt{c_1^2 + c_2^2 + c_3^2}} 
    	e^{-4\i (\zeta_1 + \zeta_1^*)\left[ (\zeta_1^2 - \zeta_1 \zeta_1^* + (\zeta_1^*)^2 )t + \frac{x}{4} \right]} 
    	\sech\left[ 4\i (\zeta_1 - \zeta_1^*)\left[ (\zeta_1^2 + \zeta_1 \zeta_1^* + (\zeta_1^*)^2 )t + \frac{x}{4} \right] 
    	+ \frac{1}{2} \ln(c_1^2 + c_2^2 + c_3^2) \right],\label{c5}\\
    	&v[1] = \frac{-\i c_2 (\zeta_1 - \zeta_1^*)}{\sqrt{c_1^2 + c_2^2 + c_3^2}} 
    	e^{-4\i (\zeta_1 + \zeta_1^*)\left[ (\zeta_1^2 - \zeta_1 \zeta_1^* + (\zeta_1^*)^2 )t + \frac{x}{4} \right]} 
    	\sech\left[ 4\i (\zeta_1 - \zeta_1^*)\left[ (\zeta_1^2 + \zeta_1 \zeta_1^* + (\zeta_1^*)^2 )t + \frac{x}{4} \right] 
    	+ \frac{1}{2} \ln(c_1^2 + c_2^2 + c_3^2) \right],\label{c6} \\
    	&w[1] = \frac{-\i c_3 (\zeta_1 - \zeta_1^*)}{\sqrt{c_1^2 + c_2^2 + c_3^2}} 
    	e^{-4\i (\zeta_1 + \zeta_1^*)\left[ (\zeta_1^2 - \zeta_1 \zeta_1^* + (\zeta_1^*)^2 )t + \frac{x}{4} \right]} 
    	\sech\left[ 4\i (\zeta_1 - \zeta_1^*)\left[ (\zeta_1^2 + \zeta_1 \zeta_1^* + (\zeta_1^*)^2 )t + \frac{x}{4} \right] 
    	+ \frac{1}{2} \ln(c_1^2 + c_2^2 + c_3^2) \right].\label{c7}
    \end{align}
    
    \begin{figure}[tbh]
    	\centering
    	\subfigure[]{\includegraphics[height=4.8cm,width=4.8cm]{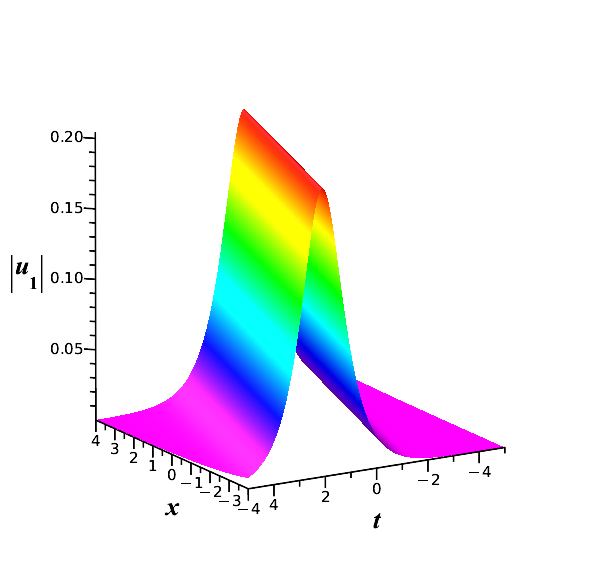}}
    	\subfigure[]{\includegraphics[height=4.8cm,width=4.8cm]{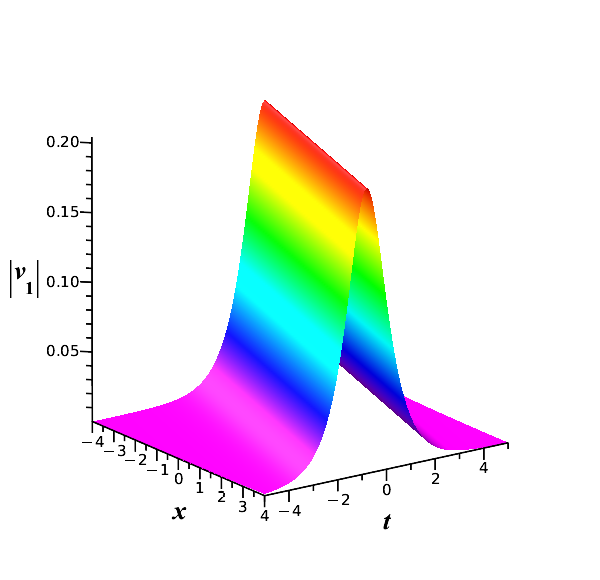}}
    	\subfigure[]{\includegraphics[height=4.8cm,width=4.8cm]{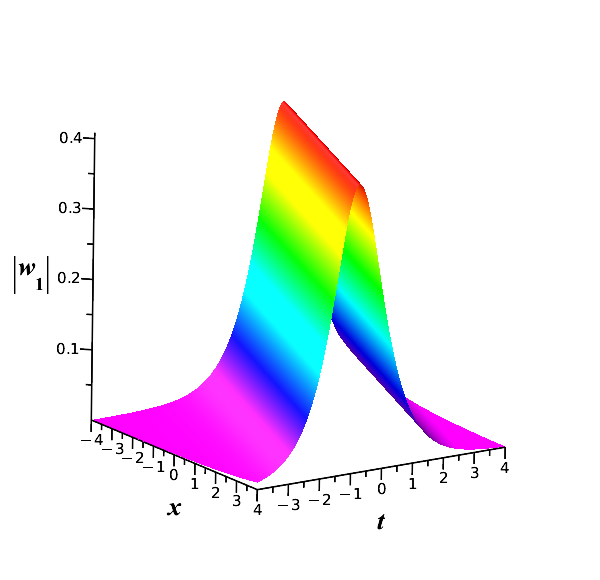}}
    	\caption{The bounded one-bright-bright-bright solution of vcmKdV with $\zeta_1 = \dfrac{1}{2} + \dfrac{1}{4}i$, $c_1=1, c_2=1, c_3=2$}
    	\label{s1-1}
    \end{figure}

    The solutions with $\zeta_1=a_1+\i b_1$ are linear solutions, see Fig. \ref{s1-1}, and we can see that there exists a proportional relationship among the components $u[1], v[1]$ and $w[1]$, such that $u\left[ 1 \right] :v\left[ 1 \right] :w\left[ 1 \right] =c_1:c_2:c_3$. And the trajectory of the solutions is given by 
    \begin{flalign}
       &\quad x = \left( 4b_1^2 - 12a_1^2 \right) t + \dfrac{\ln\left( c_1^2 + c_2^2 + c_3^2 \right)}{16b_1}, \label{c8}&
    \end{flalign}
    the maximum of the solutions are
    \begin{align*}
    	|u[1]|_{max} = \dfrac{2|c_1 b_1|}{\sqrt{c_1^2 + c_2^2 + c_3^2}},\quad |v[1]|_{max} = \dfrac{2|c_2 b_1|}{\sqrt{c_1^2 + c_2^2 + c_3^2}},\quad |w[1]|_{max} = \dfrac{2|c_3 b_1|}{\sqrt{c_1^2 + c_2^2 + c_3^2}}.
    \end{align*}
    When $N$=2, the expression of the solutions after two-fold DT are
    \begin{align}
    	&u[2] = 2\i \dfrac{c_1 \left[ 
    		M_{12} A_2^* A_1 
    		+ M_{21} A_1^* A_2 
    		- M_{11} A_2^* A_2^* 
    		- M_{22} A_1 A_1
    		\right]}{M_{11}M_{22} - M_{12}M_{21}},\label{c9}\\
    	&v[2] = 2\i \dfrac{c_2 \left[ 
    		M_{12} A_2^* A_1 
    		+ M_{21} A_1^* A_2 
    		- M_{11} A_2^* A_2^* 
    		- M_{22} A_1 A_1
    		\right]}{M_{11}M_{22} - M_{12}M_{21}},\label{c10} \\
    	&w[2] = 2\i \dfrac{c_3 \left[ 
    		M_{12}A_2^* A_1 
    		+ M_{21} A_1^* A_2
    		- M_{11} A_2^* A_2^* 
    		- M_{22} A_1 A_1
    		\right]}{M_{11}M_{22} - M_{12}M_{21}},\label{c11}
    \end{align}
    where
    \begin{align*}
    	A_1&=e^{-\i\zeta_1 \left(4t\zeta_1^2 + x\right)},\quad
    	A_2=e^{-\i\zeta_2 \left(4t\zeta_2^2 + x\right)},\\
    	M_{i,j} &= \dfrac{2\sqrt{c_1^2 + c_2^2 + c_3^2}}{\zeta_j - \zeta_i^*}
    	\cosh\left[ 4\i (\zeta_j - \zeta_i^*)\left( (\zeta_j^2 + \zeta_j \zeta_i^* + (\zeta_i^*)^2 )t + \dfrac{x}{4} \right) 
    	+ \dfrac{1}{2} \ln(c_1^2 + c_2^2 + c_3^2) \right].
    \end{align*} 
    
    \begin{figure}[tbh]
    	\centering
    	\subfigure[]{\includegraphics[height=6.8cm,width=6.8cm]{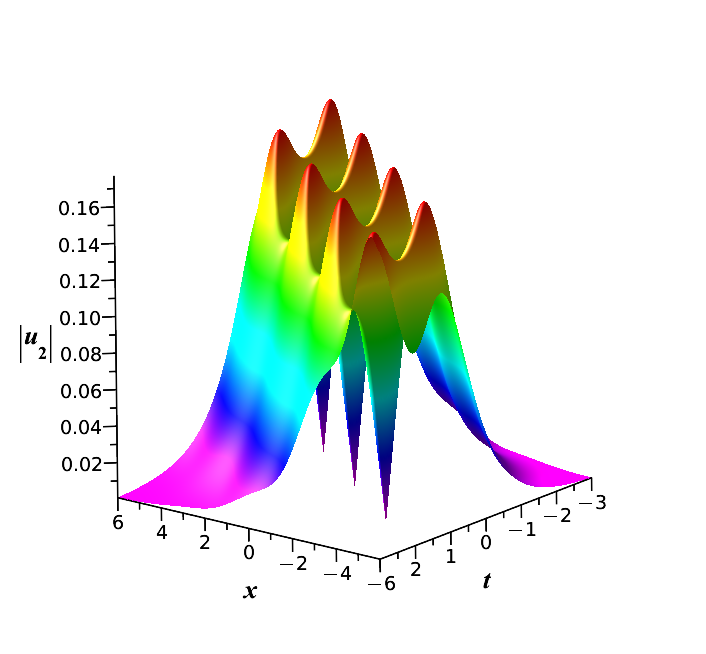}}
    	\subfigure[]{\includegraphics[height=6.8cm,width=6.8cm]{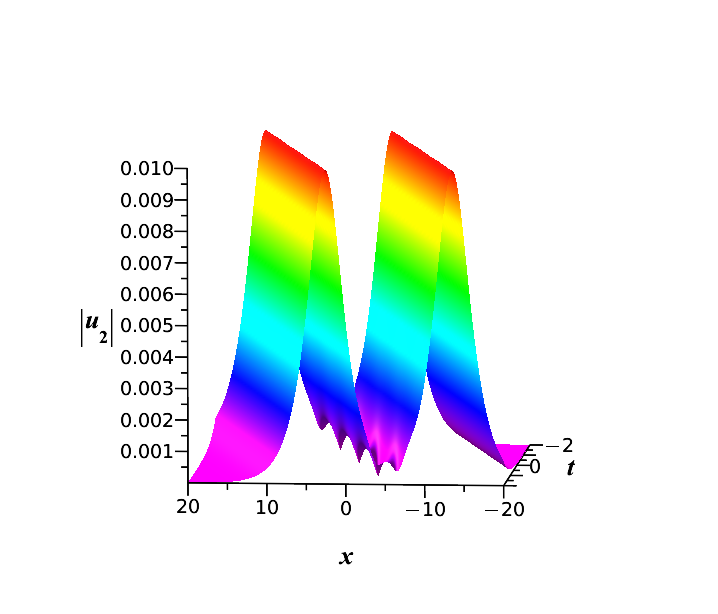}}
    	\caption{Solutions with the choice of $\zeta_1 = -\zeta_2$.Panel (a) is the bounded periodic solution with $\zeta_1 = \dfrac{1}{2} + \dfrac{1}{4}\i$, $\zeta_2 = -\dfrac{1}{2} - \dfrac{1}{4}\i$, $c_1 = 1$, $c_2 = 1$, $c_3 = 3$. Panel (b) is the bounded two bright solitons with a periodic solution with $\zeta_1 = \dfrac{1}{2} + \dfrac{1}{4}\i$, $\zeta_2 = -\dfrac{1}{2} - \dfrac{1}{4}\i$, $c_1 = 1$, $c_2 = 1$, $c_3 = 50$.}
    	\label{s1-21}
    \end{figure} 
   \begin{figure}[tbh]
   	\centering
   	\subfigure[]{\includegraphics[height=4.8cm,width=4.8cm]{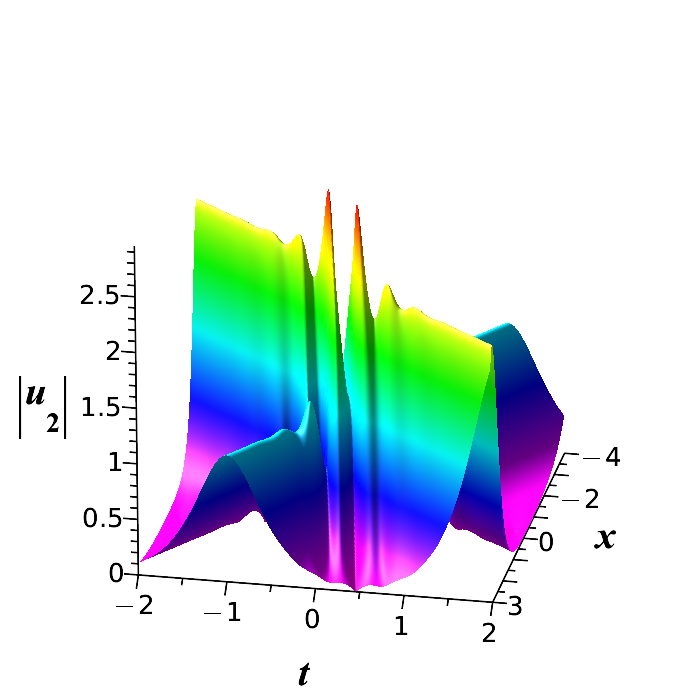}}
   	\subfigure[]{\includegraphics[height=4.8cm,width=4.8cm]{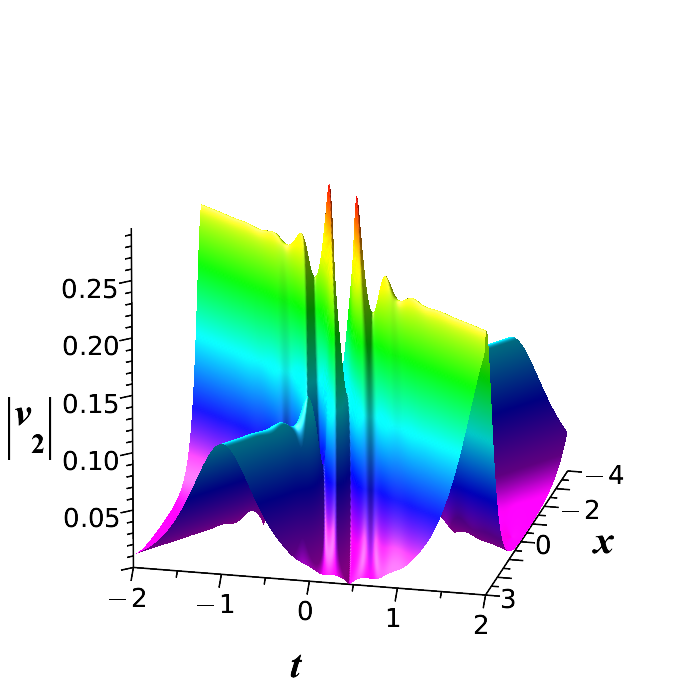}}
   	\subfigure[]{\includegraphics[height=4.8cm,width=4.8cm]{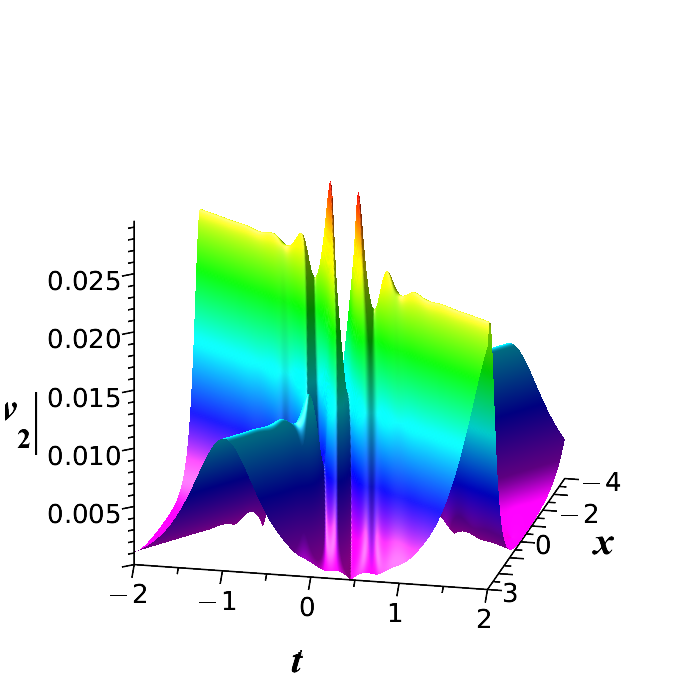}}
   	\subfigure[]{\includegraphics[height=4.8cm,width=4.8cm]{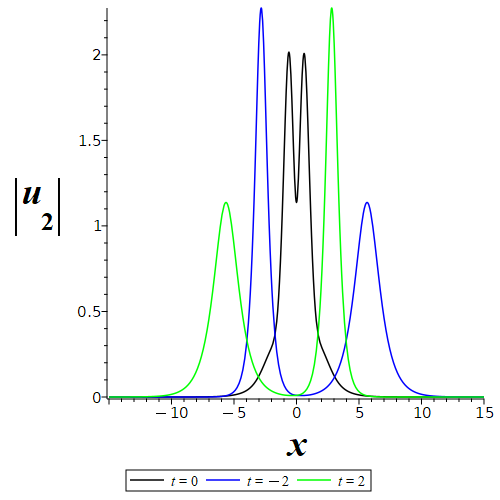}}
   	\subfigure[]{\includegraphics[height=4.8cm,width=4.8cm]{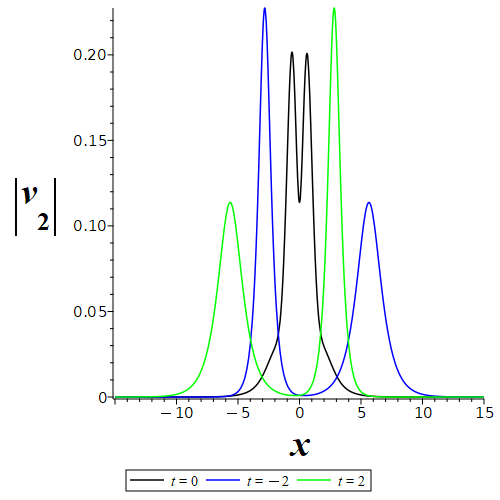}}
   	\subfigure[]{\includegraphics[height=4.8cm,width=4.8cm]{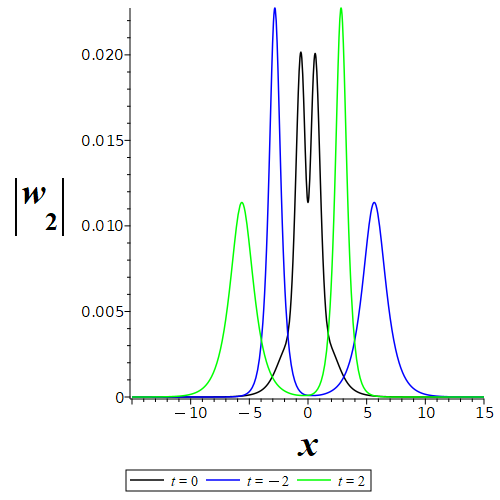}}
   \caption{Panel (a), (b), (c) are the 3D plots of the bounded two-bright-bright-bright solution of vcmKdV with $\zeta_1 = \dfrac{4}{7} + \dfrac{4}{7}i$, $\zeta_2 = -\dfrac{4}{7} - \dfrac{8}{7}i$, $c_1 = 1$, $c_2 = \dfrac{1}{10}$, $c_3 = \dfrac{1}{100}$. (d), (e), (f) are the profiles of the bounded two-bright-bright-bright solution at $t=0, t=-2, t=2$.}
   	\label{s1-2}
   	\end{figure} 
   The solutions generated by the two-fold DT maintain the same ratio among the components $u[1], v[1]$ and $w[1]$, and by analyzing the \eqref{c9}--\eqref{c11}, we observe that component $(\zeta_j - \zeta_i^*)$ in the functional structure possesses a critical property: selecting judicious values for specific parameters induces cancellation effects on certain terms. To systematically characterize this behavior, we establish the classification cases as follow
   
   (1) For $\zeta_1 = -\zeta_2^*$, then counter-diagonal elements in matrix $M$ is identically zero. As a result, we can get a bounded bright breather-like solution.
   
   (2) For $\zeta_1 = -\zeta_2$, the component $(\zeta_1 - \zeta_2^*)=(\zeta_1 + \zeta_1^*)$, and under this parameter choice the equations \eqref{c9}-\eqref{c11} will transform into a remarkably structured form, the solution $u[2]$ can be wrote as
   \begin{align}
   	  u[2] = \dfrac{
   	  	-2c_1 \bigl( \zeta_1 - \zeta_1^* \bigr) \bigl( \zeta_1 + \zeta_1^* \bigr) \bigl( k\zeta_1^* \sin z + \zeta_1 \sin z^* \bigr)
   	  }{
   	  	\biggl( \zeta_1^2 + \zeta_1^{*2} \biggr) k \sin z \sin z^*
   	  	+ 2\zeta_1 \zeta_1^* \biggl( k \cos z \cos z^* + \dfrac{k^2 + 1}{2} \biggr)
   	  },
   \end{align}
   where
   \begin{align*}
   	k=c_{1}^{2}+c_{2}^{2}+c_{3}^{2},\quad z=\alpha +\i\beta =8\zeta _{1}^{3}t+2\zeta _1x,\quad \zeta =a+\i b.
   \end{align*}
   We can get a bounded bright periodic solution, see Fig.\ref{s1-21} (a). And by choosing parameters $c_1, c_2, c_3$, we can obtain a bounded two-bright solitons coexit with a periodic solution, see Fig.\ref{s1-21} (b). The trajectory of the two bright solitons are
   \[
   \left\{
   \begin{aligned}
   	x &= -4\bigl(3ab - b^2\bigr) t + 2 \operatorname{arcsinh}\biggl( \frac{\bigl(a^2 + b^2\bigr) (k + 1)^2}{4a^2k} \biggr) \\
   	x &= -4\bigl(3ab - b^2\bigr) t - 2 \operatorname{arcsinh}\biggl( \frac{\bigl(a^2 + b^2\bigr) (k + 1)^2}{4a^2k} \biggr)
   \end{aligned}
   \right.
   \]
   and the maximum of the two bright solitons are
   \begin{align*}
   	\lvert u[2] \rvert_{\max} = \dfrac{2bc_1}{k+1} \sqrt{ \dfrac{ 
   			b^2 (k + 1)^2 + a^2 (k - 1)^2 
   		}{ 
   			k (a^2 + b^2) 
   	} }.
   \end{align*}
   We can find that except the solitons there is a periodic oscillatory waves, and we will present this periodic wave interact with the solitons in the following case.
   
   (3) For $\operatorname{Im}\zeta_1 = k \operatorname{Im}\zeta_2\ (k \neq 0)$, we will get the bounded two-bright-bright-bright solutions. Specifically, choosing $\zeta_1 = \tfrac{4}{7} + \tfrac{4}{7}\i,\ \zeta_2 = -\tfrac{4}{7} - \tfrac{8}{7}\i$
   , the solution can be plotted, see Fig. \ref{s1-2}. We can find that the solution have two bright solitons and the periodic oscillatory waves, the three-component waves interact solely during collision events, and throughout the interaction process, the periodic waves retain their inherent periodicity while achieving maximum amplitude.
   
   To investigate the collision scenarios of the solutions, we use asymptotic analysis. For this purpose, we choose $\zeta_1 = a_1 + a_1 \i$, $\zeta_2 = -a_1 - 2a_1 \i$ with $a_1 > 0$, and derive the following asymptotic expressions for the solutions.
   
   \noindent \textbf{Case 1}:When $x-4a_1^2t=c$ ($c$ is a constant)
   
   Before interacting ($t\rightarrow\infty$), we can get\\
   \begin{align}
   	  &u[2] \rightarrow \dfrac{\sqrt{5}(32\i - 56) a_1 c_1}{2\sqrt{13(c_1^2 + c_2^2 + c_3^2)}} \operatorname{sech}\left[ 4a_1 c + \dfrac{1}{2} \ln (c_1^2 + c_2^2 + c_3^2) +\dfrac{1}{2} \ln \left( \dfrac{13}{5} \right) \right] e^{-80\i a_1^3 t + 2\i a_1 c},\\
   	  &v[2] \rightarrow \dfrac{\sqrt{5}(32\i - 56) a_1 c_2}{2\sqrt{13(c_1^2 + c_2^2 + c_3^2)}} \operatorname{sech}\left[ 4a_1 c + \dfrac{1}{2} \ln (c_1^2 + c_2^2 + c_3^2) +\dfrac{1}{2} \ln \left( \dfrac{13}{5} \right) \right] e^{-80\i a_1^3 t + 2\i a_1 c},\\
   	  &w[2] \rightarrow \dfrac{\sqrt{5}(32\i - 56) a_1 c_3}{2\sqrt{13(c_1^2 + c_2^2 + c_3^2)}} \operatorname{sech}\left[ 4a_1 c + \dfrac{1}{2} \ln (c_1^2 + c_2^2 + c_3^2) +\dfrac{1}{2} \ln \left( \dfrac{13}{5} \right) \right] e^{-80\i a_1^3 t + 2\i a_1 c}.
   \end{align}
   
   After interacting ($t\rightarrow-\infty$), we can get
   
   \begin{align}
   	  &u[2] \rightarrow \dfrac{\sqrt{13}(32\i + 56)a_1c_1}{2\sqrt{5(c_1^2 + c_2^2 + c_3^2)}} \mathrm{sech}\left[ 4a_1c + \dfrac{1}{2} \ln (c_1^2 + c_2^2 + c_3^2) +\dfrac{1}{2} \ln \left( \dfrac{5}{13} \right) \right] e^{-80\i a_1^3 t + 2\i a_1c},\\
   	  &v[2] \rightarrow \dfrac{\sqrt{13}(32\mathrm{i} + 56)a_1c_2}{2\sqrt{5(c_1^2 + c_2^2 + c_3^2)}} \mathrm{sech}\left[ 4a_1c + \dfrac{1}{2} \ln (c_1^2 + c_2^2 + c_3^2) +\dfrac{1}{2} \ln \left( \dfrac{5}{13} \right) \right] e^{-80\mathrm{i}a_1^3 t + 2\mathrm{i}a_1c},\\
   	  &w[2] \rightarrow \dfrac{\sqrt{13}(32\mathrm{i} + 56)a_1c_3}{2\sqrt{5(c_1^2 + c_2^2 + c_3^2)}} \mathrm{sech}\left[ 4a_1c + \dfrac{1}{2} \ln (c_1^2 + c_2^2 + c_3^2) +\dfrac{1}{2} \ln \left( \dfrac{5}{13} \right) \right] e^{-80\mathrm{i}a_1^3 t + 2\mathrm{i}a_1c}.\\
   \end{align}
    \noindent \textbf{Case 2}:When $x+8a_1^2t=c$
    
    Before interacting($t\rightarrow\infty$), we can get
    
    \begin{align}
    	&u[2] \rightarrow \dfrac{\sqrt{5}(4 - 32\mathrm{i}) a_1 c_1}{2\sqrt{13(c_1^2 + c_2^2 + c_3^2)}} \operatorname{sech}\left[ 2a_1 c - \dfrac{1}{2} \ln (c_1^2 + c_2^2 + c_3^2) -\dfrac{1}{2} \ln \left( \dfrac{13}{5} \right) \right] e^{32\mathrm{i} a_1^3 t - 2\mathrm{i} a_1 c}, \\
    	&v[2] \rightarrow \dfrac{\sqrt{5}(4 - 32\mathrm{i}) a_1 c_2}{2\sqrt{13(c_1^2 + c_2^2 + c_3^2)}} \operatorname{sech}\left[ 2a_1 c - \dfrac{1}{2} \ln (c_1^2 + c_2^2 + c_3^2) -\dfrac{1}{2} \ln \left( \dfrac{13}{5} \right) \right] e^{32\mathrm{i} a_1^3 t - 2\mathrm{i} a_1 c}, \\
    	&w[2] \rightarrow \dfrac{\sqrt{5}(4 - 32\mathrm{i}) a_1 c_3}{2\sqrt{13(c_1^2 + c_2^2 + c_3^2)}} \operatorname{sech}\left[ 2a_1 c - \dfrac{1}{2} \ln (c_1^2 + c_2^2 + c_3^2) -\dfrac{1}{2} \ln \left( \dfrac{13}{5} \right) \right] e^{32\mathrm{i} a_1^3 t - 2\mathrm{i} a_1 c}.
    \end{align}
    
    After interacting($t\rightarrow-\infty$), we can get
    
    \begin{align}
    	u[2] &\rightarrow \dfrac{\sqrt{13}(4 + 32\mathrm{i}) a_1 c_1}{2\sqrt{5(c_1^2 + c_2^2 + c_3^2)}} \operatorname{sech}\left[ 2a_1 c - \dfrac{1}{2} \ln (c_1^2 + c_2^2 + c_3^2) -\dfrac{1}{2} \ln \left( \dfrac{5}{13} \right) \right] e^{32\mathrm{i} a_1^3 t - 2\mathrm{i} a_1 c},\\
    	v[2] &\rightarrow \dfrac{\sqrt{13}(4 + 32\mathrm{i}) a_1 c_2}{2\sqrt{5(c_1^2 + c_2^2 + c_3^2)}} \operatorname{sech}\left[ 2a_1 c - \dfrac{1}{2} \ln (c_1^2 + c_2^2 + c_3^2) -\dfrac{1}{2} \ln \left( \dfrac{5}{13} \right) \right] e^{32\mathrm{i} a_1^3 t - 2\mathrm{i} a_1 c},\\
    	w[2] &\rightarrow \dfrac{\sqrt{13}(4 + 32\mathrm{i}) a_1 c_3}{2\sqrt{5(c_1^2 + c_2^2 + c_3^2)}} \operatorname{sech}\left[ 2a_1 c - \dfrac{1}{2} \ln (c_1^2 + c_2^2 + c_3^2) -\dfrac{1}{2} \ln \left( \dfrac{5}{13} \right) \right] e^{32\mathrm{i} a_1^3 t - 2\mathrm{i} a_1 c}.
    \end{align}

    \subsection{Dark-bright-bright soliton solution}
    \numberwithin{equation}{section}
    In this section, we choose the $u=ae^{\i\sqrt6 ax}, v=0, w=0$ to be the initial solution, and $a$ is an arbitrarily real number. By this we can obtain the following fundamental solutions of \eqref{b1}--\eqref{b2},
    \begin{align}
    	\varPsi_{i} &= 
    	\begin{pmatrix}
    		e^{\frac{\i \sqrt{\scriptstyle 6} ax}{2}} & 0 & 0 & 0 \\
    		0 & e^{-\frac{\i\sqrt{\scriptstyle 6} ax}{2}} & 0 & 0 \\
    		0 & 0 & 1 & 0 \\
    		0 & 0 & 0 & 1
    	\end{pmatrix}
    	\begin{pmatrix}
    		\displaystyle \frac{2a}{k_i + \i\sqrt{6}a + 2\i \zeta_i} e^{A_{i1}} \\[2ex]
    		e^{A_{i1}} \\[1ex]
    		c_4 e^{A_{i2}} \\[1ex]
    		c_3 e^{A_{i2}}
    	\end{pmatrix}, 
    \end{align}
    where $c_3, c_4$ are arbitrarily real numbers, and
    \begin{flalign*}
    	k_i &= \sqrt{-4\sqrt{6}a\zeta_i -10a^2 -4\zeta_i^2}, \\
    	A_{i1} &= \frac{k_i}{2}(\xi_{i1}t + x), \quad A_{i2} = \i\zeta_i(\zeta_i^2t + x), \\
    	\xi_i &= -2\sqrt{6}a\zeta_i + 4a^2 + 4\zeta_i^2.
    \end{flalign*}
    In this case, we have the expression of solutions after $N$-fold DT
    \begin{equation}
    	u[N] = e^{\mathrm{i} \sqrt{6}\,a\,x}
    	\left( 
    	a + 2\mathrm{i} \frac{\det(M_1)}{\det(M)} 
    	\right), 
    	\quad
    	v[N] = 2\mathrm{i} \,
    	e^{\frac{\i \sqrt{\scriptstyle 6} ax}{2}}
    	\frac{\det(M_2)}{\det(M)}, 
    	\quad
    	w[N] = 2\mathrm{i} \,
    	e^{\frac{\i\sqrt{\scriptstyle 6}ax}{2}}
    	\frac{\det(M_3)}{\det(M)},
    \end{equation}
     where
    \begin{align*}
    	M = \bigl( M_{ij} \bigr)_{1 \le i,j \le N},\quad
    	M_1 = \begin{pmatrix}
    		M & Y_1 \\
    		X_1 & 0 
    	\end{pmatrix},\quad 
    	M_2 = \begin{pmatrix}
    		M & Y_2 \\
    		X_1 & 0 
    	\end{pmatrix},\quad 
    	M_3 = \begin{pmatrix}
    		M & Y_3 \\
    		X_1 & 0 
    	\end{pmatrix},
    \end{align*}
    and
    \begin{flalign*}
    	&M_{ij}=\dfrac{1}{\zeta_j - \zeta_i^{*}} \left[ B_{ij} e^{A_{i1}^{*} + A_{j1}} + \left( c_3^2 + c_4^2 \right) e^{A_{i2}^{*} + A_{j2}} \right],&\\
    	&X_1 = \begin{pmatrix}
    		\displaystyle \frac{2a}{k_{\text{i}} + \i\sqrt{6}a + 2i\zeta_{\text{i}}} e^{A_{11}},  
    		&\displaystyle \frac{2a}{k_{\text{i}} + \i\sqrt{6}a + 2i\zeta_{\text{i}}} e^{A_{21}},
    		& \cdots 
    		& \displaystyle \frac{2a}{k_{\text{i}} + \i\sqrt{6}a + 2i\zeta_{\text{i}}} e^{A_{N1}}
    	\end{pmatrix},&\\
    	&Y_1 = \begin{pmatrix}
    		e^{A_{11}^{*}},
    		&e^{A_{21}^{*}},
    		&\cdots 
    		e^{A_{N1}^{*}}
    	\end{pmatrix}^{T},&\\
    	&Y_2 = \begin{pmatrix}
    		c_4 e^{A_{12} ^{*}},
    		&c_4 e^{A_{22} ^{*}}, 
    		&\cdots 
    		&c_4 e^{A_{N2} ^{*}}
    	\end{pmatrix}^{T},&\\
    	&Y_3 = \begin{pmatrix}
    		c_3 e^{A_{12} ^{*}},
    		&c_3 e^{A_{22} ^{*}}, 
    		&\cdots 
    		&c_3 e^{A_{N2} ^{*}}
    	\end{pmatrix}^{T},&\\
    	&B_{ij} = 1 + \dfrac{4a^2}{(k_i^* - \mathrm{i}\sqrt{6}a - 2\mathrm{i}\zeta_i^*)(k_j + \mathrm{i}\sqrt{6}a + 2\mathrm{i}\zeta_j)}.
    \end{flalign*}
    For $N$=1, the solutions can be written as:
   \begin{align}
   	   u[1] &=ae^{\i\sqrt{6}ax} \dfrac{\sqrt{B}
   	   	\cosh\left[ \alpha + \dfrac{1}{2}\ln C \left( c_{3}^{2} + c_{4}^{2} \right) - \dfrac{1}{2}\ln B \right]
   	   }{\sqrt{C+4a^2}
   	   	\cosh\left[ \alpha + \dfrac{1}{2}\ln C \left( c_{3}^{2} + c_{4}^{2} \right) - \dfrac{1}{2}\ln\left( C+4a^2 \right) \right]
   	   },\\
   	   v[1] &= \frac{
   	   	2\i a c_4 (\zeta_1 - \zeta_1^*)
   	   	\left( \i\sqrt{6a} + 2\i\zeta_1^* - k_1^* \right)
   	   	\exp\left( \frac{\i\sqrt{6} a x}{2} \right)
   	   }{
   	   	e^{\i\beta} \sqrt{C (c_3^2 + c_4^2)} \sqrt{C + 4a^2}
   	   } \quad \mathrm{sech}\left[
   	   \alpha + \frac{1}{2} \ln\left( C (c_3^2 + c_4^2) \right) - \frac{1}{2} \ln\left( C + 4a^2 \right)
   	   \right],\\
   	   w[1] &= \frac{
   	   	2\i a c_3 (\zeta_1 - \zeta_1^*)
   	   	\left( \i\sqrt{6a} + 2\i\zeta_1^* - k_1^* \right)
   	   	\exp\left( \frac{\i\sqrt{6} a x}{2} \right)
   	   }{
   	   	e^{\i\beta} \sqrt{C (c_3^2 + c_4^2)} \sqrt{C + 4a^2}
   	   } \quad  \mathrm{sech}\left[
   	   \alpha + \frac{1}{2} \ln\left( C (c_3^2 + c_4^2) \right) - \frac{1}{2} \ln\left( C + 4a^2 \right)
   	   \right],
   \end{align}
    where
    \begin{align*}
    	B &= C + 4a^2 + 4 \left( \zeta_{1}^{*} - \zeta_{1} \right) \left( \i k_{1}^{*} + \sqrt{6} a + 2 \zeta_{1}^{*} \right),\\
    	C &= \left( \mathrm{i}\sqrt{6}\mathrm{a} + 2\mathrm{i}\zeta_1 + \mathrm{k}_1 \right) \mathrm{k}_1^* + \mathrm{i} \left( -\sqrt{6}\mathrm{a} - 2\mathrm{i}\zeta_1^* \right) \mathrm{k}_1 + 2\mathrm{a} \left( \zeta_1 + \zeta_1^* \right) \sqrt{6} + 6\mathrm{a}^2 + 4\zeta_1 \zeta_1^*,\\
    	\alpha &= \operatorname{Re}\left( A_{12} - A_{11} \right), \quad
    	\beta  = \operatorname{Im}\left( A_{12} - A_{11} \right).
    \end{align*}
    
    \begin{figure}[tbh]
    	\centering
    	\subfigure[]{\includegraphics[height=4.8cm,width=4.8cm]{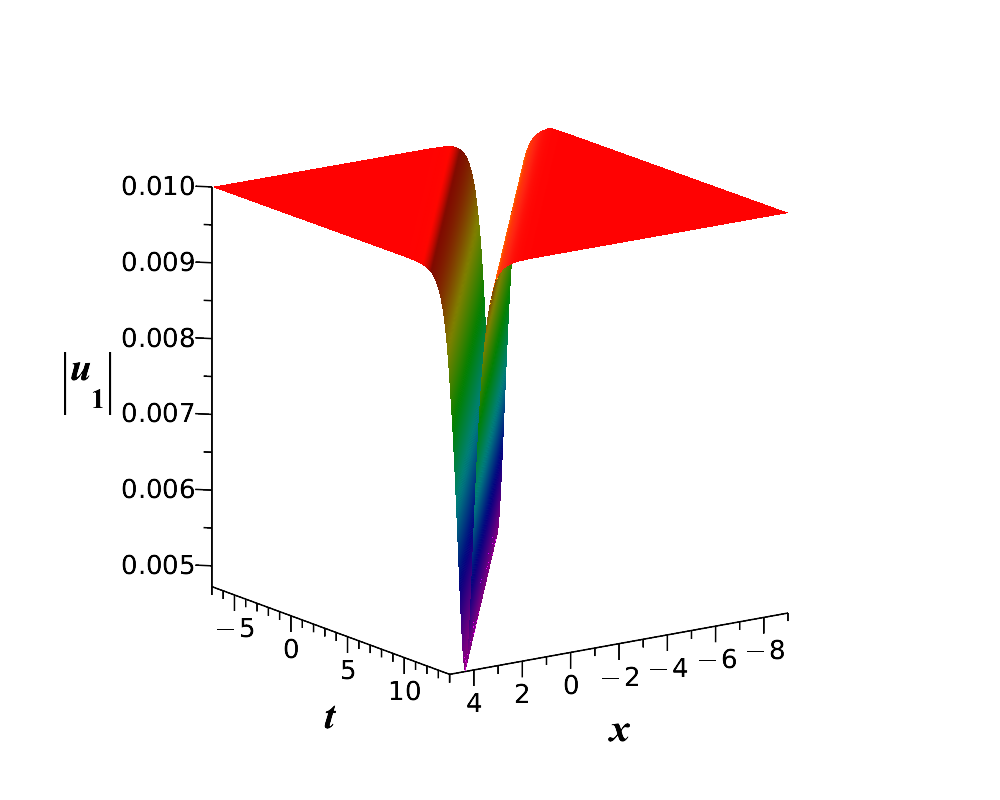}}
    	\subfigure[]{\includegraphics[height=4.8cm,width=4.8cm]{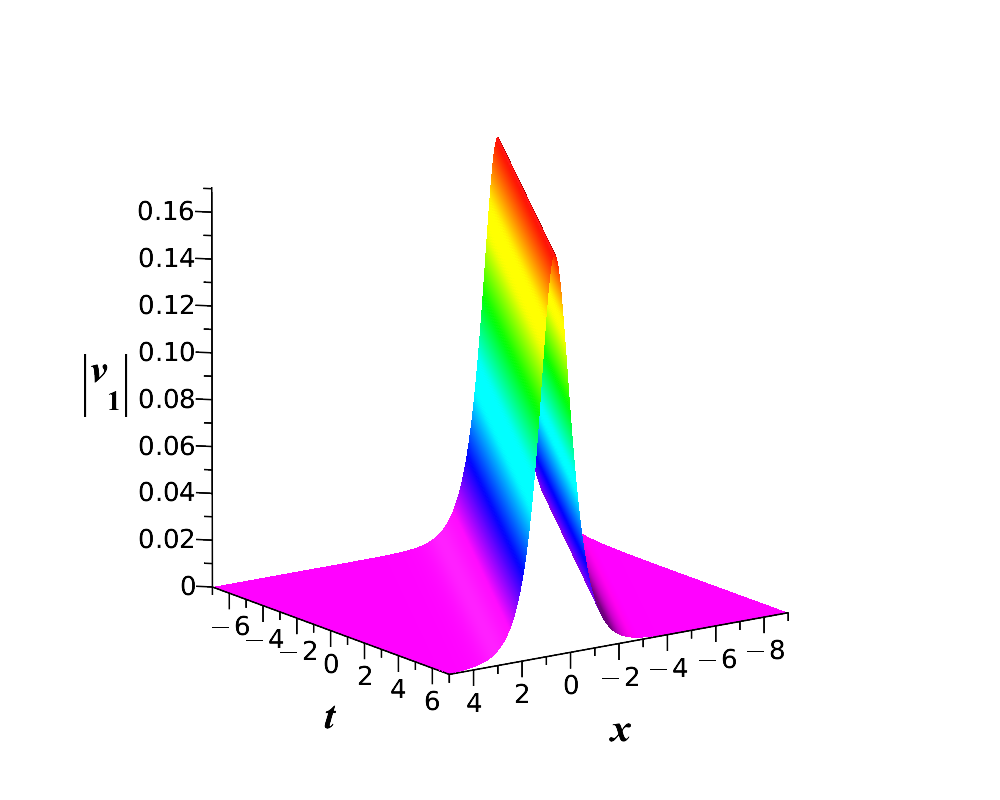}}
    	\subfigure[]{\includegraphics[height=4.8cm,width=4.8cm]{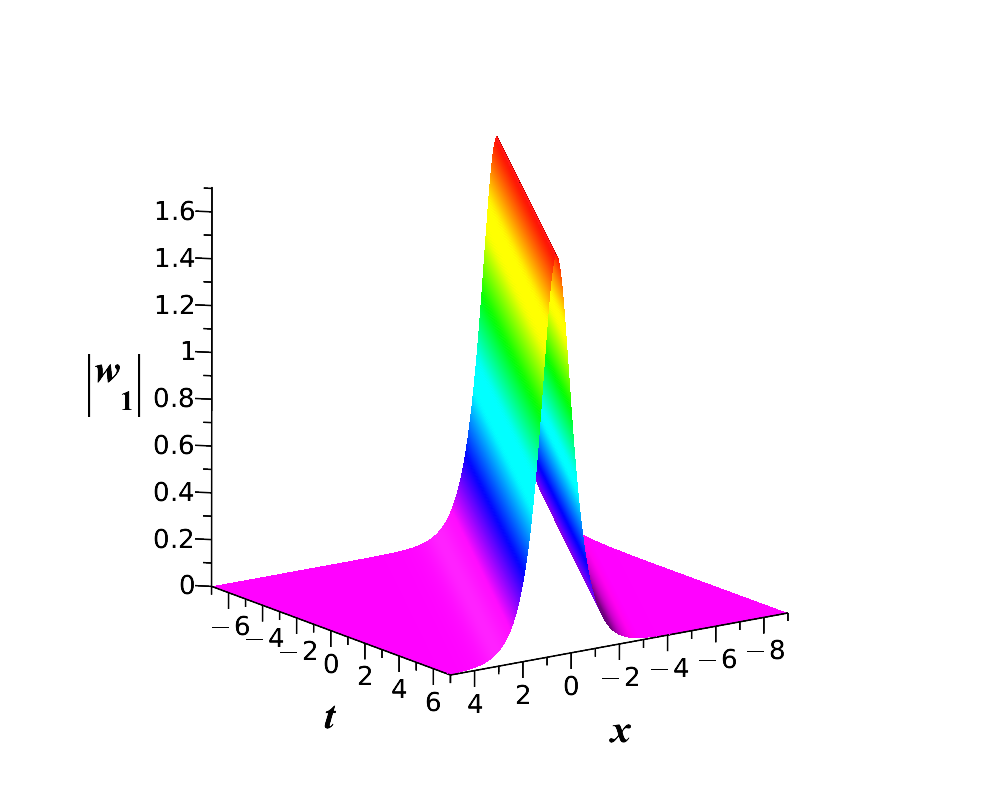}}
    	\caption{The bounded one-dark-bright-bright solution of vcmKdV with $\zeta_1 = \dfrac{1}{2} + \dfrac{1}{4}\mathrm{i}$, $c_1=1, c_2=1, c_3=2$}
    	\label{s2-1}
    \end{figure}
   The one-dark-bright-bright soliton solution are  linear solution, see Fig.\ref{s2-1}. The solution $v[1]$ is still proportional to $w[1]$ with the ratio $c_4 : c_3$. The trajectory of $v[1]$, $w[1]$ is
    \begin{align}
    	Re[i\zeta_{\text{1}}(\zeta_{\text{1}}^2t + x)-\frac{k_{\text{1}}}{2}(\xi_{\text{1}1}t + x) + \frac{1}{2} \ln\left( C (c_3^2 + c_4^2) \right) - \frac{1}{2} \ln\left( C + 4a^2 \right)]=0,
    \end{align}
    and the trajectory of $u[1]$ is
    \begin{align}
    	Re[i\zeta_{\text{1}}(\zeta_{\text{1}}^2t + x)-\frac{k_{\text{1}}}{2}(\xi_{\text{1}1}t + x) + \frac{1}{2} \ln\left( C (c_3^2 + c_4^2) \right) - \frac{1}{2} \ln\left( B \right)]=0,
    \end{align}
    We calculate that
    \begin{align*}
    	|u[1]|_{\min} = 0, \quad
    	|v[1]|_{\max} =  \operatorname{Im}(\zeta_1) \left| 4ac_4 \dfrac{\i\sqrt{6}\,a + 2\i\zeta_1^{*} - k_1^{*}}{\sqrt{C(c_3^2 + c_4^2)} \sqrt{C + 4a^2}} \right|, \quad
    	|w[1]|_{\max} = \operatorname{Im}(\zeta_1) 
    	\left|  4ac_3  \dfrac{	\i\sqrt{6}\,a + 2\i\zeta_1^{*} - k_1^{*}}{\sqrt{C(c_3^2 + c_4^2)} \sqrt{C + 4a^2}} \right|.
    \end{align*}
   For $N$=2, the expression of the solutions after two-fold DT are
    \begin{align*}
    	u[2] &= e^{\displaystyle \i\sqrt{6} a x} \left[ a + 2\i \, \frac{\phi_1 \bigl( M_{12}\varphi_2^* - M_{22}\varphi_1^* \bigr) - \phi_2 \bigl( M_{11}\varphi_2^* - M_{21}\varphi_1^* \bigr)}{M_{11}M_{22} - M_{12}M_{21}} \right], \\
    	v[2] &= 2\i e^{\displaystyle \frac{\i\sqrt{6}ax}{2}} \cdot \frac{\phi_1 \bigl( M_{12}\psi_2^* - M_{22}\psi_1^* \bigr) - \phi_2 \bigl( M_{11}\psi_2^* - M_{21}\psi_1^* \bigr)}{M_{11}M_{22} - M_{12}M_{21}}, \\
    	w[2] &= 2\i e^{\displaystyle \frac{\i\sqrt{6} a x}{2}} \cdot \frac{\phi_1 \bigl( M_{12}\chi_2^* - M_{22}\chi_1^* \bigr) - \phi_2 \bigl( M_{11}\chi_2^* - M_{21}\chi_1^* \bigr)}{M_{11}M_{22} - M_{12}M_{21}}.
    \end{align*}
    We present the bounded two-dark-bright-bright soliton solution with the concrete parameters, see Fig.\ref{s2-2}. The solutions after two fold DT are two solitons with a periodic solution similar to bright-bright-bright soliton solution. And with the parameters as $\zeta_1 = -\zeta_2$ or $\zeta_1 = -\zeta_2^*$, we can also obtain the breather-like solutions. 
    \begin{figure}[tbh]
    	\centering
    	\subfigure[]{\includegraphics[height=4.8cm,width=4.8cm]{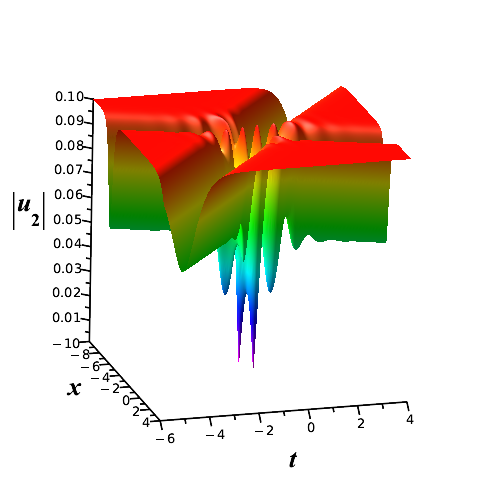}}
    	\subfigure[]{\includegraphics[height=4.8cm,width=4.8cm]{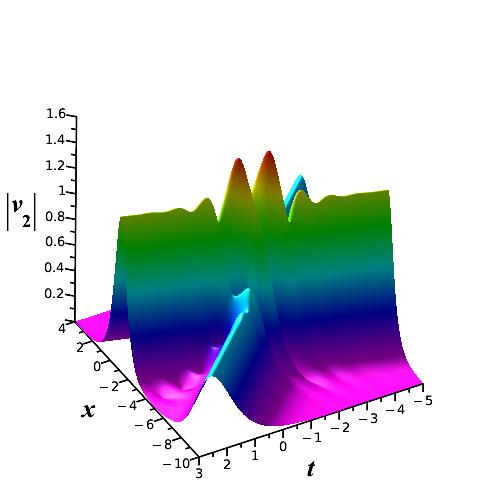}}
    	\subfigure[]{\includegraphics[height=4.8cm,width=4.8cm]{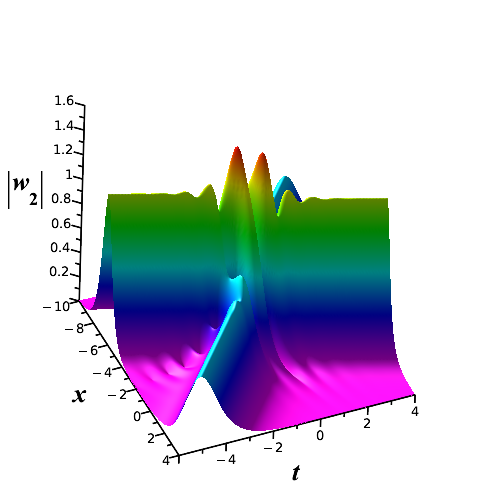}}
    	\subfigure[]{\includegraphics[height=4.8cm,width=4.5cm]{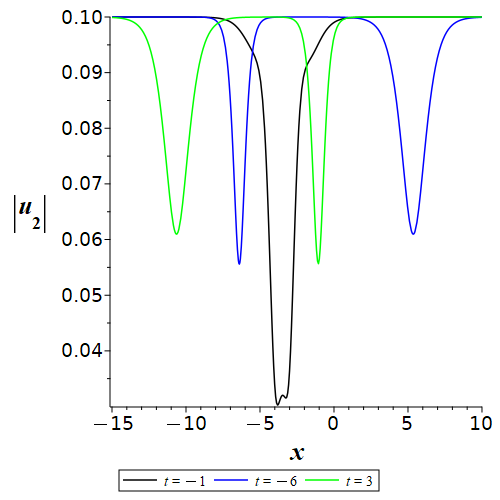}}
    	\subfigure[]{\includegraphics[height=4.8cm,width=4.5cm]{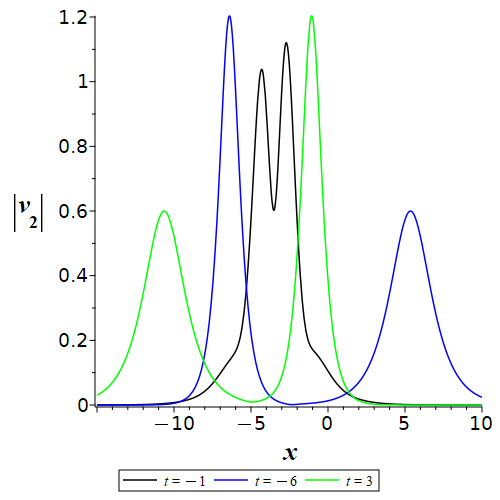}}
    	\subfigure[]{\includegraphics[height=4.8cm,width=4.5cm]{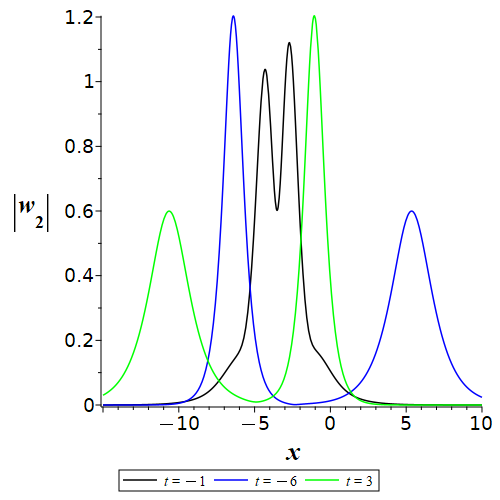}}
    	\caption{Panel (a), (b), (c) are the 3D plots of the bounded two-dark-bright-bright solution of vcmKdV with $\zeta_1 = \dfrac{4}{9} + \dfrac{23}{27}\i$, $\zeta_2 = -\dfrac{4}{9} - \dfrac{23}{54}\i$,$c_1 = \dfrac{1}{10}$, $c_2 = \dfrac{1}{10}$, $a= \dfrac{1}{10}$. Panel (d), (e), (f) are the profiles of the bounded two-bright-bright-bright solution at $t=0, t=-6, t=4$.}
    	\label{s2-2}
    \end{figure}

    \section{Breather solution}\label{3}
    \numberwithin{equation}{section}
    
    In this section, we choose the solutions
    \begin{equation}\label{d1}
    u = a e^{\i \sqrt{6(a^2 + a_1^2 + a_2^2)} x},\quad
    v = a_1 e^{\i \sqrt{6(a^2 + a_1^2 + a_2^2)} x},\quad
    w = a_2 e^{\i \sqrt{6(a^2 + a_1^2 + a_2^2)} x},
    \end{equation}
    as the initial solutions, where $a$, $a_1$ and $a_2$ are arbitrary parameters. Therefore, substituting these into \eqref{b1} and \eqref{b2}, we obtain
    \begin{align}
    	\varPsi_j = 
    	{\scriptstyle 
    		\begin{pmatrix}
    			 e^{\textstyle \frac{\sqrt{6}}{2} \i dx} & 0 & 0 & 0 \\
    			0 & e^{\textstyle -\frac{\sqrt{6}}{2} \i dx} & 0 & 0 \\
    			0 & 0 & e^{\textstyle -\frac{\sqrt{6}}{2} \i dx} & 0 \\
    			0 & 0 & 0 & e^{\textstyle -\frac{\sqrt{6}}{2} \i dx}
    		\end{pmatrix}}
    	\begin{pmatrix}
    		\frac{1}{2} \left[B_{j1} e^{A_{j1}} - B_{j2} e^{-A_{j1}} \right] \\
    		a \left( e^{A_{j1}} - e^{-A_{j1}} \right) - \left( a_1\alpha_1 + a_2\alpha_2 \right) e^{A_{j2}} \\
    		a_2 \left( e^{A_{j1}} - e^{-A_{j1}} \right) + a\alpha_1 e^{A_{j2}} \\
    		a_1 \left( e^{A_{j1}} - e^{-A_{j1}} \right) + a\alpha_2 e^{A_{j2}}
    	\end{pmatrix},
    \end{align}
    where
    \begin{align}
    	&d = \sqrt{a^2 + a_1^2 + a_2^2}, \quad 
    	k_j = \sqrt{-4\sqrt{6}\, d\,\zeta_j - 10d^2 - 4\zeta_j^2},\\
    	&B_{j1} =  \i\sqrt{6}d + 2\i\zeta_j + k_j ,\quad
    	B_{j2} =  \i\sqrt{6}d + 2\i\zeta_j - k_j ,\\
    	&A_{j1} = \dfrac{k_j}{2} (\xi_{j1}\, t - x) + s_j, \quad 
    	A_{j2} = 4\i\zeta_j^3 t + \i\zeta_j x + \dfrac{\sqrt{6}\, d}{2} \i x,\\
    	&\xi_{j1} = 2\sqrt{6}\, d\,\zeta_j - 4d^2 - 4\zeta_j^2,
    \end{align}
    and the $s_i = s_{iR} + \i s_{iI} (i = 1, \dots, N)$. To simplify the form of $k_j$, we choose the eignvalue $\zeta_j$ as $\zeta_j = d \sinh(p_j) - \frac{\sqrt{6}}{2} d,p_j = p_{jR} + \i p_{jI}$, so we can get
    \begin{align*}
    	k_j &= 2\mathrm{i} d \cosh(p_j),\quad\xi_{j1} = d^2 \left[ -4 \sinh^2(p_j) + 6\sqrt{6} \sinh(p_j) - 16 \right].
    \end{align*} 
    After all, the $N$-th solution can be written as
    \begin{equation}
    	u[N] = e^{\mathrm{i}\sqrt{6}\,a\,x}
    	\left( 
    	a + 2\mathrm{i} \frac{\det(M_1)}{\det(M)} 
    	\right), 
    	\quad
    	v[N] = e^{\mathrm{i}\sqrt{6}\,a\,x}
    	\left( 
    	a_1 + 2\mathrm{i} \frac{\det(M_2)}{\det(M)} 
    	\right), 
    	\quad
    	w[N] = e^{\mathrm{i}\sqrt{6}\,a\,x}
    	\left( 
    	a_2 + 2\mathrm{i} \frac{\det(M_3)}{\det(M)} 
    	\right),
    \end{equation}
    where
    \begin{align*}
    	M = \bigl( M_{ij} \bigr)_{1 \le i,j \le N},\quad
    	M_1 = \begin{pmatrix}
    		M & Y_1 \\
    		X_1 & 0 
    	\end{pmatrix},\quad 
    	M_2 = \begin{pmatrix}
    		M & Y_2 \\
    		X_1 & 0 
    	\end{pmatrix},\quad 
    	M_3 = \begin{pmatrix}
    		M & Y_3 \\
    		X_1 & 0 
    	\end{pmatrix},
    \end{align*}
    and
    \begin{align*}
    	M_{ij} &= \frac{1}{4(\zeta_j - \zeta_i^*)} \Bigl[ (B_{j1}B_{i1}^* + 4d^2) e^{A_{j1} + A_{i1}^*}
    	 - (B_{j2}B_{i1}^* + 4d^2) e^{-A_{j1} + A_{i1}^*} - (B_{j1}B_{i2}^* + 4d^2) e^{A_{j1} - A_{i1}^*} \\
    	 &+ (B_{j2}B_{i2}^* + 4d^2) e^{-A_{j1} - A_{i1}^*} + 4\bigl[ (\alpha_1^2 + \alpha_2^2)d^2 - (a_1\alpha_2 - a_2\alpha_1) \bigr] e^{A_{j2} + A_{i2}^*} \Bigr],\\
    	X_1 &= \begin{pmatrix}
    		\dfrac{1}{2} \left[ B_{11} e^{A_{11}} - B_{12} e^{-A_{11}} \right] &
    		\dfrac{1}{2} \left[ B_{21} e^{A_{21}} - B_{22} e^{-A_{21}} \right] &
    		\cdots &
    		\dfrac{1}{2} \left[ B_{N1} e^{A_{N1}} - B_{N2} e^{-A_{N1}} \right]
    	\end{pmatrix} ,\\
    	Y_1 &= \begin{pmatrix}
    		a \left( e^{A_{11}^*} - e^{-A_{11}^*} \right) - (a_1\alpha_1 + a_2\alpha_2) e^{A_{12}^*} &
    		\cdots &
    		a \left( e^{A_{N1}^*} - e^{-A_{N1}^*} \right) - (a_1\alpha_1 + a_2\alpha_2) e^{A_{N2}^*}
    	\end{pmatrix}^T ,\\
    	Y_2 &= \begin{pmatrix}
    		a_1 \left( e^{A_{11}^*} - e^{-A_{11}^*} \right) - a\alpha_1 e^{A_{12}^*} &
    		a_1 \left( e^{A_{21}^*} - e^{-A_{21}^*} \right) - a\alpha_1 e^{A_{22}^*} &
    		\cdots &
    		a_1 \left( e^{A_{N1}^*} - e^{-A_{N1}^*} \right) - a\alpha_1 e^{A_{N2}^*}
    	\end{pmatrix}^T ,\\
    	Y_3 &= \begin{pmatrix}
    		a_2 \left( \mathrm{e}^{A_{11}^*} - \mathrm{e}^{-A_{11}^*} \right) - a\alpha_2 \mathrm{e}^{A_{12}^*} &
    		a_2 \left( \mathrm{e}^{A_{21}^*} - \mathrm{e}^{-A_{21}^*} \right) - a\alpha_2 \mathrm{e}^{A_{22}^*} &
    		\cdots &
    		a_2 \left( \mathrm{e}^{A_{N1}^*} - \mathrm{e}^{-A_{N1}^*} \right) - a\alpha_2 \mathrm{e}^{A_{N2}^*}
    	\end{pmatrix}^T.
    \end{align*}
     
     \begin{figure}[tbh]
     	\centering
     	\subfigure[]{\includegraphics[height=4.5cm,width=4.9cm]{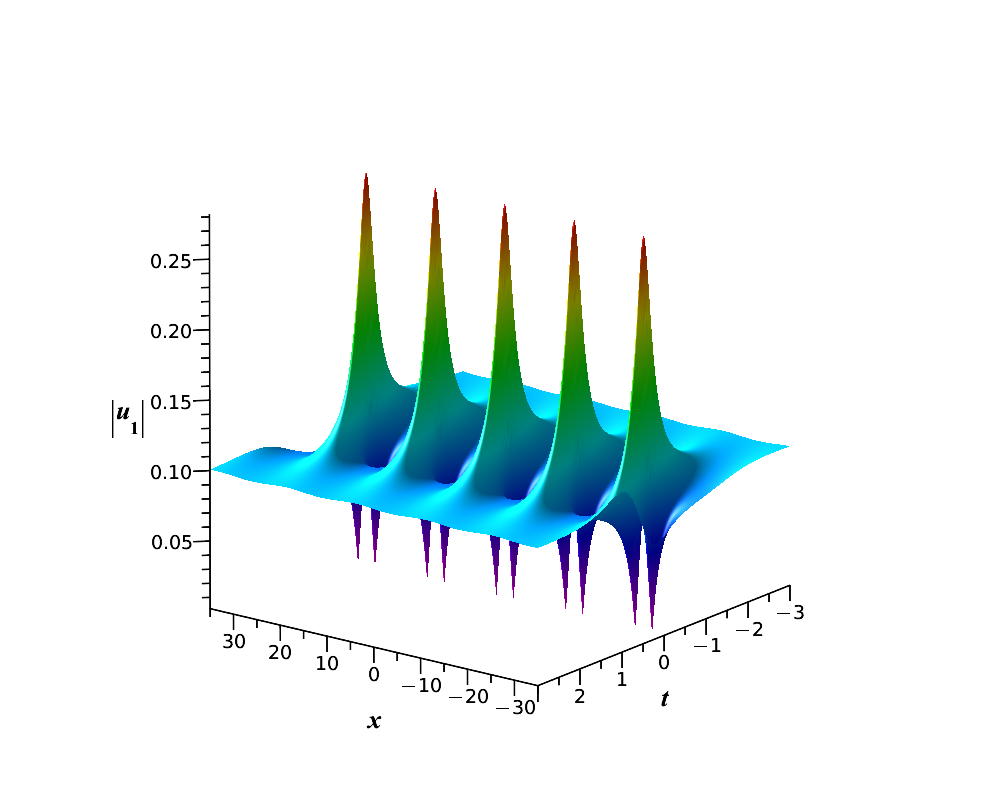}}
     	\subfigure[]{\includegraphics[height=4.5cm,width=4.9cm]{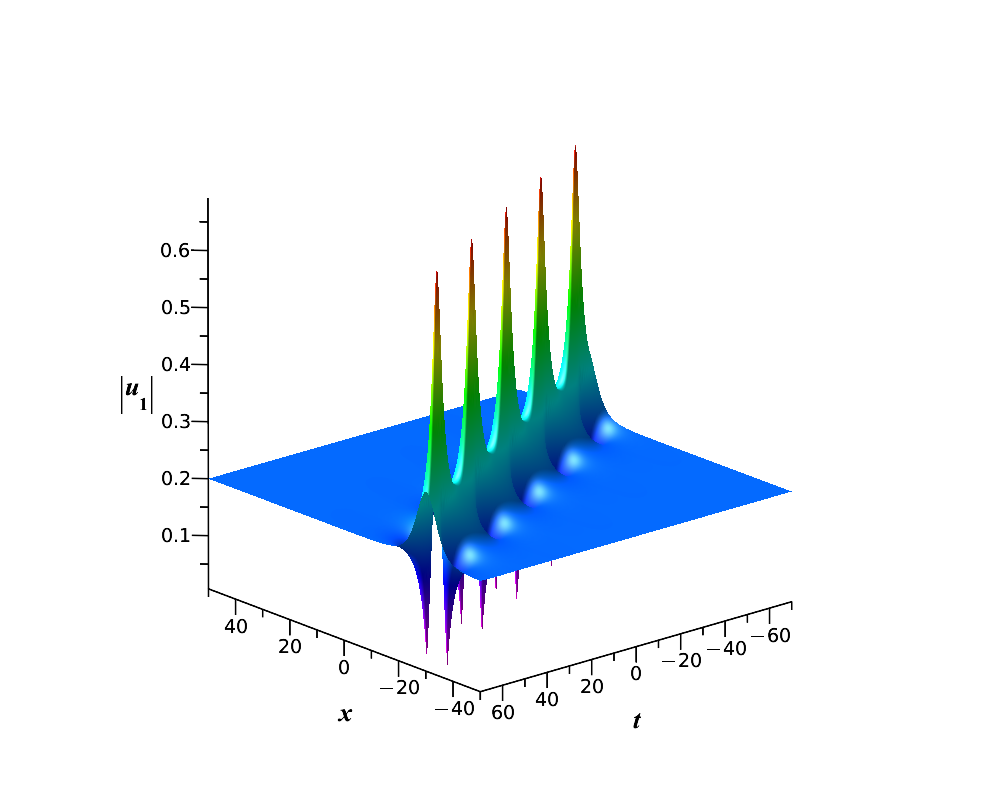}}
     	\subfigure[]{\includegraphics[height=4.5cm,width=4.9cm]{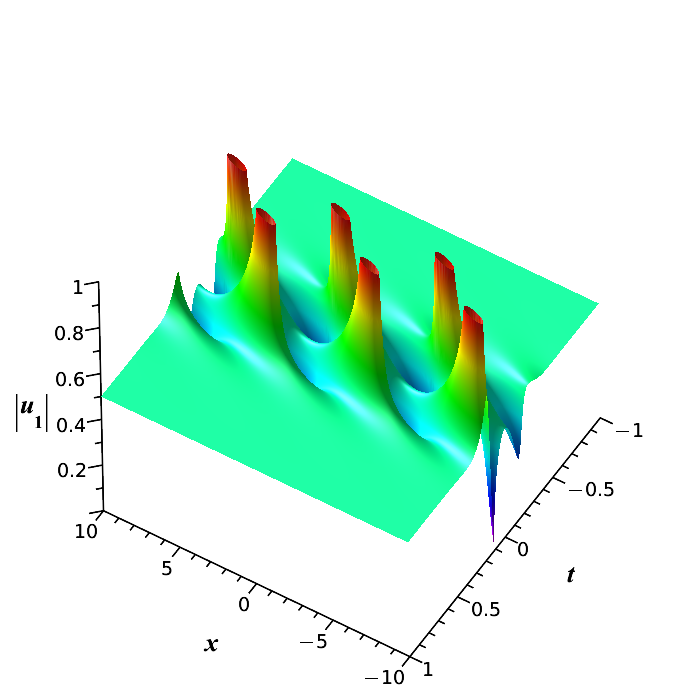}}
     	\caption{Panel (a) is the bounded Akhmediev breather with $p_1=2\i,s_1=0,\alpha_1=0,\alpha_2=0,a_1=0,a_2=\dfrac{1}{2},a=\dfrac{1}{10}$, (b) is the bounded regular breather with $p_1=\dfrac{\pi}{2}\i+\dfrac{2}{3},s_1=0,\alpha_1=0,\alpha_2=0,a_1=0,a_2=0,a=\dfrac{1}{5}$ , and (c) is the collapsing Akhmediev breather with $p_1=2\i,s_1=1,\alpha_1=0,\alpha_2=0,a_1=0,a_2=1,a=\dfrac{1}{2}$}
     	\label{b-1}
     \end{figure} 
    Specifically, the general one-breather solution can be given as
    \begin{align}
    	u[1] &= e^{\mathrm{i}\sqrt{6} d x} \left[ 
    	a \frac{F_1 + F_2}{F_1} + 
    	4 (a_1\alpha_2 + a_2\alpha_1) (\zeta_1 - \zeta_1^*) \frac{F_3}{F_1} 
    	\right],\label{d8} \\ 
    	v[1] &= e^{\mathrm{i}\sqrt{6} d x} \left[ 
    	a_1 \frac{F_1 + F_2}{F_1} - 
    	4a\alpha_2  (\zeta_1 - \zeta_1^*) \frac{F_3}{F_1} 
    	\right],\label{d9} \\ 
    	w[1] &= e^{\mathrm{i}\sqrt{6} d x} \left[ 
    	a_2 \frac{F_1 + F_2}{F_1} - 
    	4a\alpha_1  (\zeta_1 - \zeta_1^*) \frac{F_3}{F_1} 
    	\right],\label{d10}
    \end{align}
    where
    \begin{align*}
    	F_1 &= \bigl[ k_{1}^{*}k_1 + 2\sqrt{6}d(\zeta_1 + \zeta_1^*) + 4\zeta_1\zeta_1^* + 10d^2 \bigr] \cosh(A_{11} + A_{11}^*) \\
    	&\quad - \bigl[ \mathrm{i}(\sqrt{6}d + 2\zeta_1^*)k_1 - \mathrm{i}(\sqrt{6}d + 2\zeta_1)k_1^* \bigr] \sinh(A_{11} + A_{11}^*) \\
    	&\quad + \bigl[ k_{1}^{*}k_1 - 2\sqrt{6}d(\zeta_1 + \zeta_1^*) - 4\zeta_1\zeta_1^* - 10d^2 \bigr] \cosh(A_{11} - A_{11}^*) \\
    	&\quad - \bigl[ \mathrm{i}(\sqrt{6}d + 2\zeta_1^*)k_1 + \mathrm{i}(\sqrt{6}d + 2\zeta_1)k_1^* \bigr] \sinh(A_{11} - A_{11}^*) \\
    	&\quad + 2\left[ \left( \alpha _{1}^{2}+\alpha _{2}^{2} \right) d^2-\left( a_1\alpha _1-a_2\alpha _2 \right) ^2 \right] e^{A_{12}+A_{12}^{*}}\\
    	F_2 &= -\bigl[ (4\mathrm{i}\zeta_1 - 4\mathrm{i}\zeta_1^*) k_1 \bigr] \sinh(A_{11} + A_{11}^*) \\
    	&\quad + \bigl[ 4\sqrt{6}d(-3\zeta_1 - 5\zeta_1^*) + 8\zeta_1^2 - 8\zeta_1\zeta_1^* \bigr] \cosh(A_{11} + A_{11}^*) \\
    	&\quad + \bigl[ (4\mathrm{i}\zeta_1 - 4\mathrm{i}\zeta_1^*) k_1 \bigr] \sinh(A_{11} - A_{11}^*) \\
    	&\quad + \bigl[ -4\sqrt{6}d(-3\zeta_1 - 5\zeta_1^*) - 8\zeta_1^2 + 8\zeta_1\zeta_1^* \bigr] \cosh(A_{11} + A_{11}^*)\\
    	F_3 &= \left[ \i k_1 \cosh(A_{11}) - (d \sqrt{6} + 2 \zeta_1) \sinh(A_{11}) \right] \exp\left(-4\i (\zeta_1^*)^3 t - \i \zeta_1^* x - \frac{\i \sqrt{6}}{2} d x\right).
    \end{align*}
    \begin{figure}[tbh]
     	\centering
     	\subfigure[]{\includegraphics[height=4.5cm,width=6cm]{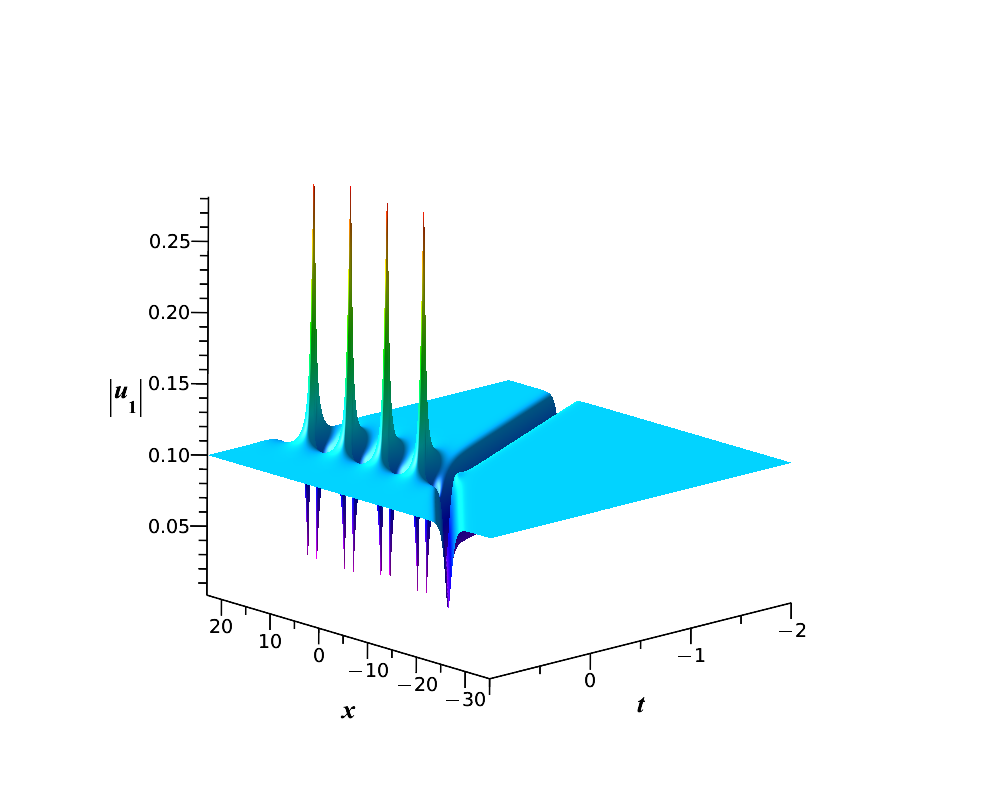}}
     	\subfigure[]{\includegraphics[height=4.5cm,width=6cm]{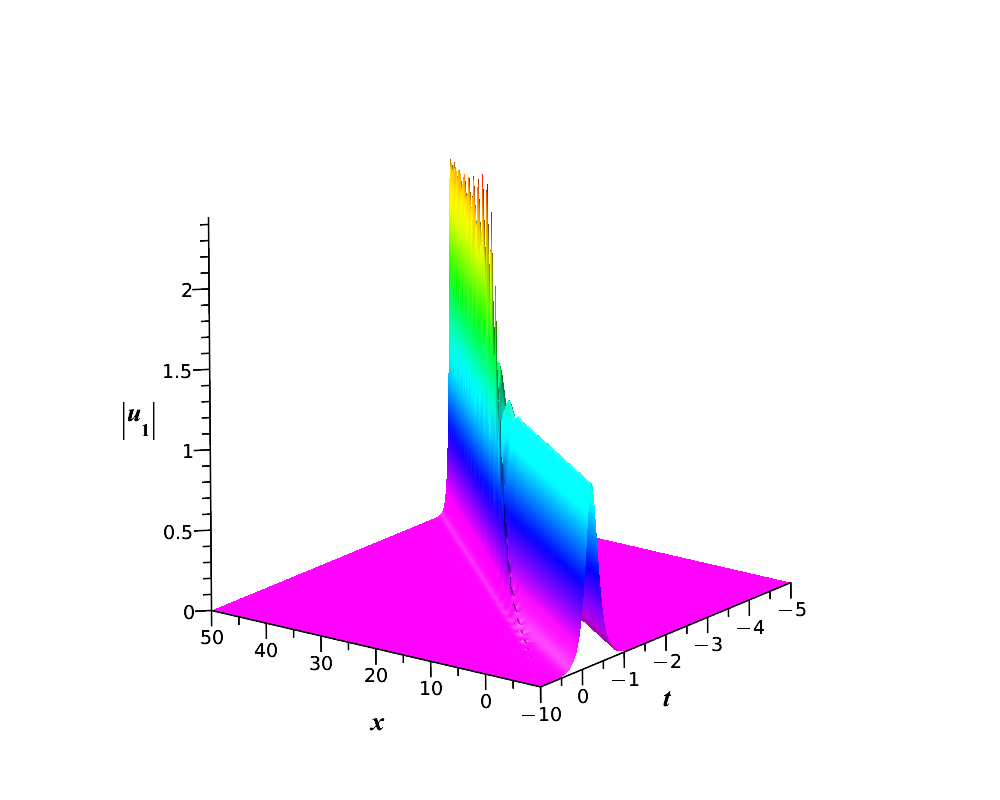}}
     	\caption{Panel (a) is the bounded Akhmediev breather interacting with two bright solitons with $p_1=2\i, s_1=0, \alpha_1=0, \alpha_2=\frac{1}{10}, a_1=0, a_2=1, a=\dfrac{1}{10}$. Panel(b) is a bounded breather wave interacting with two bright solitons with $p_1=\dfrac{\pi}{2}\i+\dfrac{2}{3}, s_1=0, \alpha_1=\frac{1}{50000},  \alpha_2=\frac{1}{1000}, a_1=1, a_2=\frac{1}{10}, a=0$.}
     	\label{b-2}
    \end{figure}
    \hfill
    \begin{figure}[tbh]
    	\centering
    	\subfigure[]{\includegraphics[height=4.5cm,width=5cm]{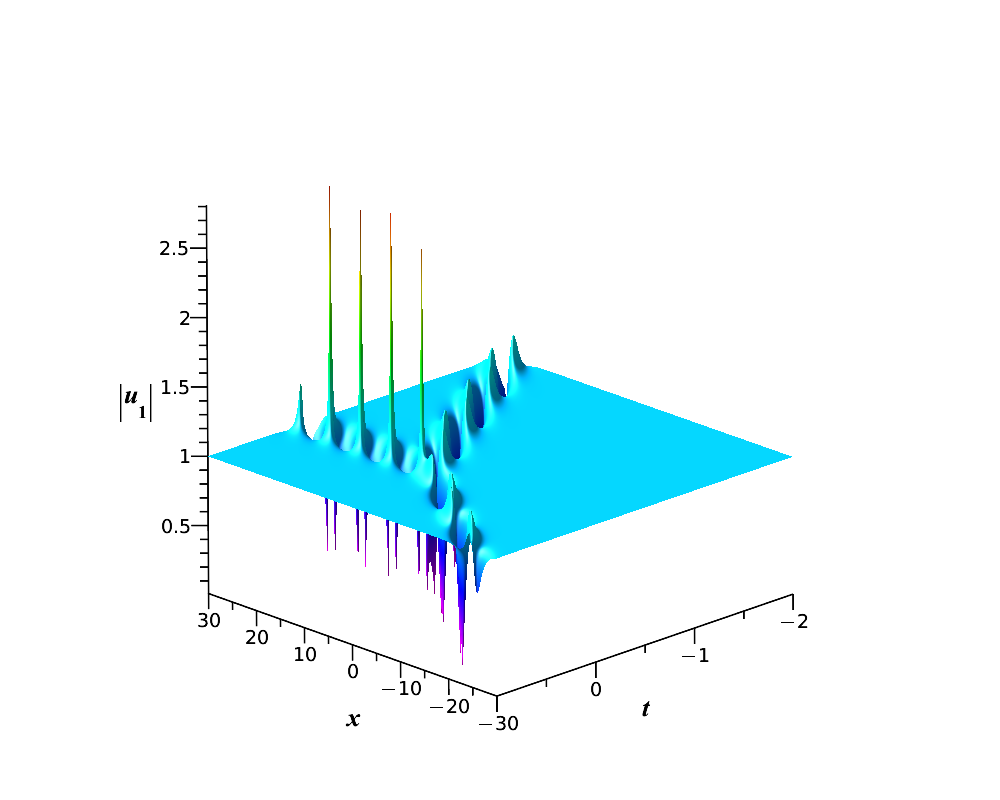}}
    	\subfigure[]{\includegraphics[height=4.5cm,width=5cm]{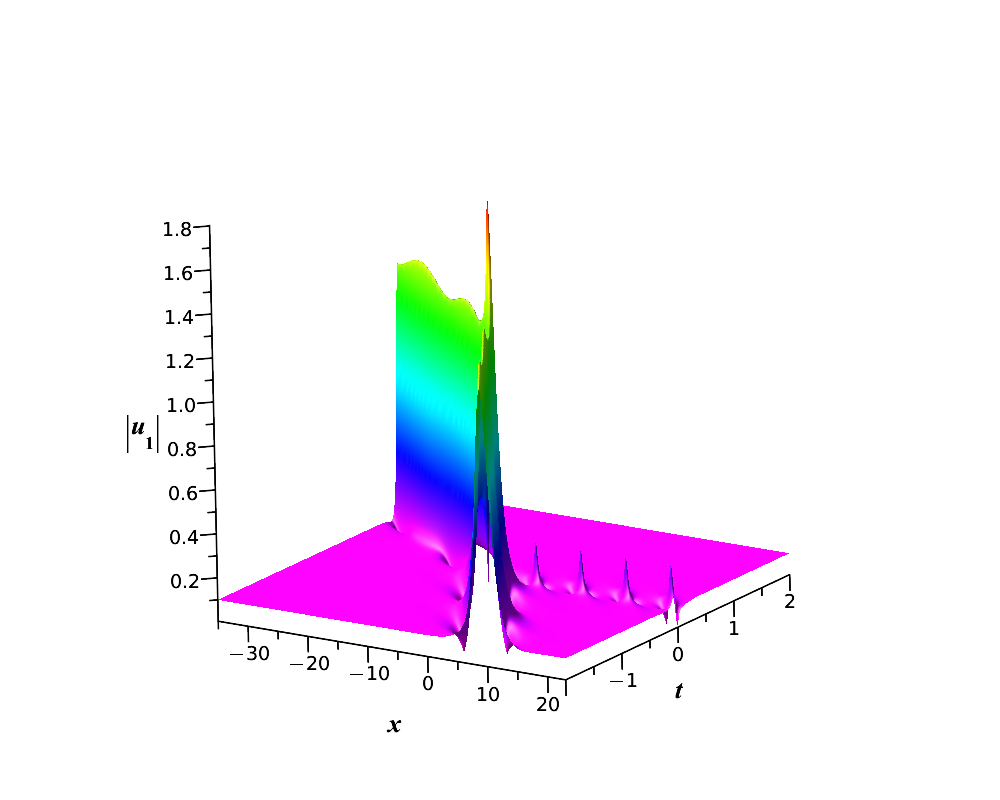}}
    	\subfigure[]{\includegraphics[height=4.5cm,width=5cm]{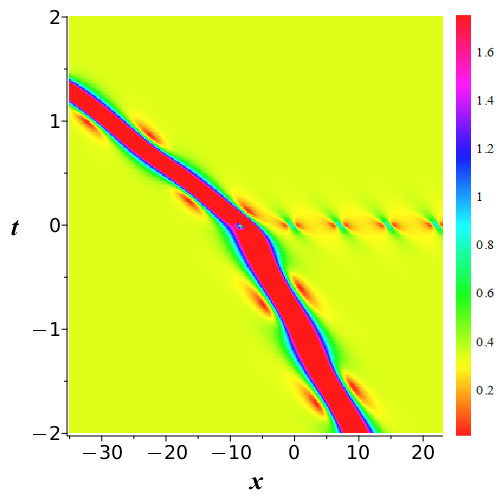}}
    	\caption{Panel (a) is the bounded Akhmediev breather interacting with two breather-breather solutions with $p_1=2\i, s_1=0, \alpha_1=150, \alpha_2=11, a_1=\frac{1}{2},a_2=\frac{2}{5}, a=1$. Panel(b) is the bounded Akhmediev breather interacting with two soliton-breather-breather solutions with $p_1=2\i, s_1=0, \alpha_1=\frac{1}{50000}, \alpha_2=\frac{1}{1000}, a_1=1, a_2=0, a=\frac{1}{10}$, and (c) is the density plot of this solution (b).}
    	\label{b-3}
    \end{figure}
    The behavior of the one-brether solution is governed by multiple parameters $\alpha_{i}, a_{i}, a (i=1,2)$ and the eigenvalue. To characterize these dependencies, we choose the component $u$ and delineate our analysis into the following cases
    
    (1)If $\alpha_1=0, \alpha_2=0$,we can obtain the Akhmediev breather and the regular one-breather solution, we can see that the three components have the ratio $u\left[ 1 \right] :v\left[ 1 \right] :w\left[ 1 \right] =c_1:c_2:c_3$. If we choose $p_{1R}=0$ and $s_1=0$, we can get the bounded Akhmediev breather. If we choose $p_{1R} \ne 0$, we can obtain the regular one-breather solution. It is worth to point out that if we choose $p_{1R}=0$ and $s_{1R}\ne0$, we can obtain the collapsing Akhmediev breather, see Fig.\ref{b-1}.
    
    (2)If $\alpha_1$ and $\alpha_2$ are not both zero, we can obtain the dark-bright solitons interact with breather solution, see Fig.\ref{b-2}. If $a=0, \left( a_1\alpha _2+a_2\alpha _1 \right) \ne 0$, we can obtain a breather interacting with two bright solitons. If $a \ne 0,\left( a_1\alpha _2+a_2\alpha _1 \right)=0$ and we choose $p_{1R}=0$, the Akhmediev breather interacting with two dark solitons can be presented.
    
    (3)If $\alpha_1$ and $\alpha_2$ are not both zero, $a \ne 0$ , $\left( a_1\alpha _2+a_2\alpha _1 \right) \ne 0$ and with $p_{1R}=0$, we can obtain the Akhmediev breather interacting with two breather-breather waves, see Fig.\ref{b-3}(a). By adjusting the parameters, we can demonstrate a special solution which has the Akhmediev breather interacting with two soliton-breather-breather solution. see Fig.\ref{b-3}(b). If we choose $p_{1R} \ne 0$,we can get the regular breather waves interacting with two other waves, and it is worth to show that a soliton interacting with two breathers can also be plotted,see Fig.\ref{b-4}.
    
    Similar to component $u$, the components $v$ and $w$ can derive the same characters based on equations \eqref{d8}-\eqref{d10}.
    
    \begin{figure}[tbh]
    	\centering
    	\subfigure[]{\includegraphics[height=4.5cm,width=6cm]{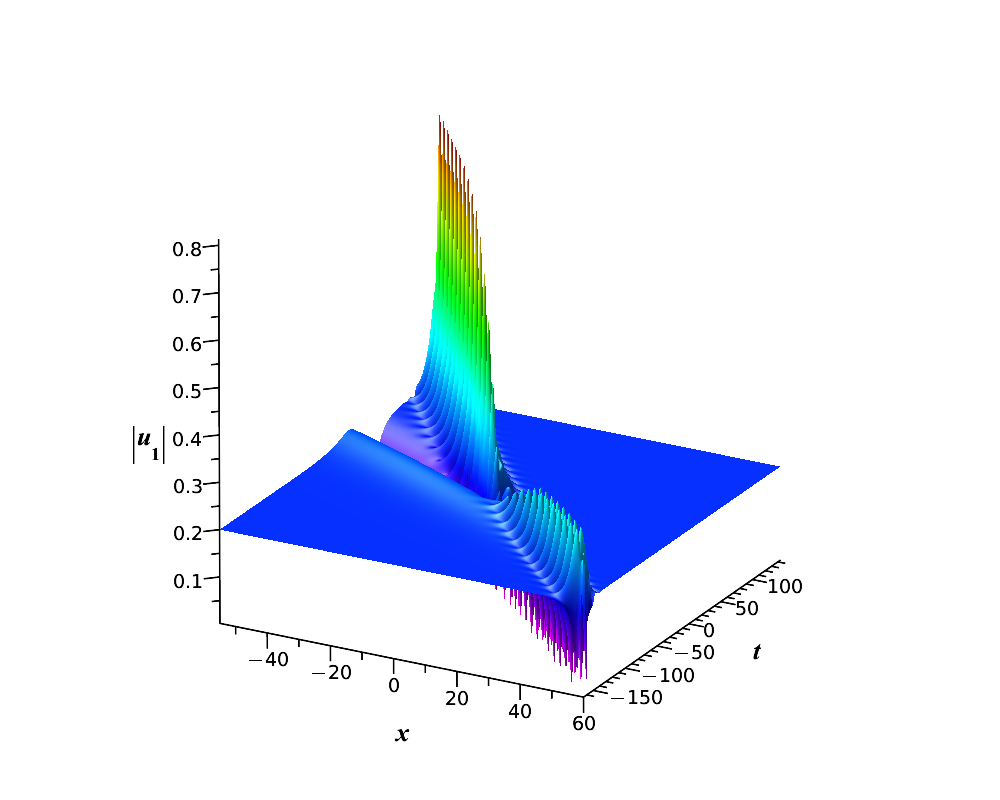}}
    	\subfigure[]{\includegraphics[height=4.5cm,width=6cm]{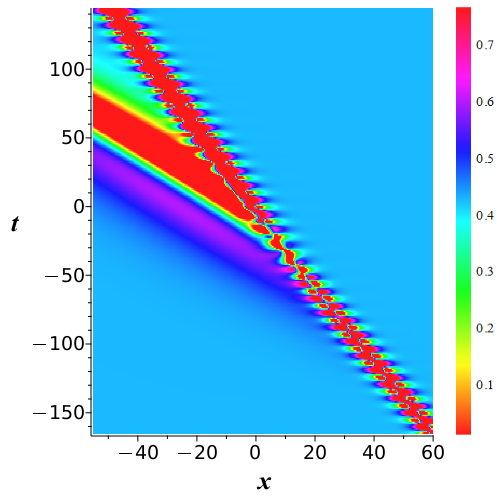}}
    	\caption{Panel(a) is the solution of two breather waves interacting with a dark soliton solution with $p_1=\dfrac{\pi}{2}\i+1, s_1=0,\alpha_1=\frac{1}{10},\alpha_2=\frac{1}{2},a_1=\frac{1}{10},a_2=0,a=\frac{1}{5}$, and panel(b) is the density plot of this solution (b).}
    	\label{b-4}
    \end{figure}   
    
    \section{Positons and rogue wave solution of vcmKdV equation}\label{4}
    \numberwithin{equation}{section}
    \subsection{Positons}
    In this section, we consider the zero "seed" solution \eqref{c1} with $\zeta _1=ia+i\epsilon $, where $\varepsilon$ is a small parameter. Then we can get
    \begin{align}
    	\Psi_1 = \begin{pmatrix}
    		e^{-8\i \Big( \left( -\frac{\varepsilon^2}{2} + (\i b - a)\varepsilon + \i a b - \frac{a^2}{2} + \frac{b^2}{2} \right)t + \frac{x}{8} \Big)(b + \i a + \i\varepsilon)} \\
    		c_1 e^{8\i \Big( \left( -\frac{\varepsilon^2}{2} + (\i b - a)\varepsilon + \i a b - \frac{a^2}{2} + \frac{b^2}{2} \right)t + \frac{x}{8} \Big)(b + \i a + \i\varepsilon)} \\
    		c_2 e^{8\i \Big( \left( -\frac{\varepsilon^2}{2} + (\i b - a)\varepsilon + \i ab - \frac{a^2}{2} + \frac{b^2}{2} \right)t + \frac{x}{8} \Big)(b + \i a + \i\varepsilon)} \\
    		c_3 e^{8\i \Big( \left( -\frac{\varepsilon^2}{2} + (\i b - a)\varepsilon + \i ab - \frac{a^2}{2} + \frac{b^2}{2} \right)t + \frac{x}{8} \Big)(b + \i a + \i\varepsilon)}
    	\end{pmatrix}.
    \end{align}
    By setting eigenvalues 
    $\zeta_i = \zeta_1 + \varepsilon, 
    \zeta_i^{*} = \zeta_1^{*} + \widetilde{\varepsilon}$, $(i = 2, 3, \dots, N)$ , and 
    $\varepsilon, \widetilde{\varepsilon}$ ~are different small parameters. 
    With the higher-order Taylor expansion and limit technique, we obtain the degenerate form of the solutions. 
    By substituting 
    $\varPsi_i = \varPsi_1(\zeta_1 + \varepsilon)$, 
    $(i = 2, 3, \dots, N)$, we can derive the $N$-positon solutions of vcmKdV equation based on the zero ``seed'' solution.
    \begin{equation}\label{e2}
    	u[N] = 2\mathrm{i} \frac{\det(M_1)}{\det(M)},\quad
    	v[N] = 2\mathrm{i} \frac{\det(M_2)}{\det(M)},\quad
    	w[N] = 2\mathrm{i} \frac{\det(M_3)}{\det(M)},
    \end{equation}
    where
    \begin{align}
    	M &= \lim_{\varepsilon \to 0, \widetilde{\varepsilon} \to 0}
    	\begin{pmatrix}
    		M_{11} 
    		& \dfrac{1}{1!}\dfrac{\partial}{\partial \varepsilon} M_{12} 
    		& \cdots 
    		& \dfrac{1}{(N-1)!}\dfrac{\partial^{N-1}}{\partial \varepsilon^{N-1}} M_{1N} \\
    		\\
    		\dfrac{1}{1!}\dfrac{\partial}{\partial \widetilde{\varepsilon}} M_{21} 
    		& \dfrac{1}{1!}\dfrac{1}{1!}\dfrac{\partial^2}{\partial \widetilde{\varepsilon} \partial \varepsilon} M_{22} 
    		& \cdots 
    		& \dfrac{1}{1!}\dfrac{1}{(N-1)!}\dfrac{\partial^{N}}{\partial \widetilde{\varepsilon} \partial \varepsilon^{N-1}} M_{2N} \\
    		\\
    		\vdots & \vdots & \ddots & \vdots \\
    		\\
    		\dfrac{1}{(N-1)!}\dfrac{\partial^{N-1}}{\partial \widetilde{\varepsilon}^{N-1}} M_{N1} 
    		& \dfrac{1}{(N-1)!}\dfrac{1}{1!}\dfrac{\partial^{N}}{\partial \widetilde{\varepsilon}^{N-1} \partial \varepsilon} M_{N2} 
    		& \cdots 
    		& \dfrac{1}{(N-1)!}\dfrac{1}{(N-1)!}\dfrac{\partial^{2N-2}}{\partial \widetilde{\varepsilon}^{N-1} \partial \varepsilon^{N-1}} M_{NN}
    	\end{pmatrix},\\
    	\\
    	M_1& = \begin{pmatrix}
    		M & Y_1 \\ 
    		X_1 & 0 
    	\end{pmatrix}, \quad 
    	M_2 = \begin{pmatrix}
    		M & Y_2 \\ 
    		X_1 & 0 
    	\end{pmatrix}, \quad 
    	M_3 = \begin{pmatrix}
    		M & Y_3 \\ 
    		X_1 & 0 
    	\end{pmatrix},
    \end{align}
    and
    \begin{align*}
    	M_{ij} &= \frac{1}{(\zeta_1 + \varepsilon) - (\zeta_1^* + \widetilde{\varepsilon})} 
    	\varPsi_1^\dag(\zeta_1^* + \widetilde{\varepsilon}) 
    	\varPsi_1(\zeta_1 + \varepsilon), \\
    	X_1    &= \lim_{\varepsilon \to 0}
    	\begin{pmatrix}
    		\phi_1, 
    		 \partial_{\varepsilon} \phi_1,
    		 \cdots, 
    		 \partial_{\varepsilon}^{N-1} \phi_1 
    	\end{pmatrix}, \quad
    	Y_1    = \lim_{\widetilde{\varepsilon} \to 0}
    	\begin{pmatrix}
    		\varphi_1^{*},
    		 \partial_{\widetilde{\varepsilon}} \varphi_1^{*},
    		 \cdots, 
    		 \partial_{\widetilde{\varepsilon}}^{N-1} \varphi_1^{*} 
    	\end{pmatrix}^{T}, \\
    	Y_2    &= \lim_{\widetilde{\varepsilon} \to 0}
    	\begin{pmatrix}
    		\psi_1^{*}, 
    		 \partial_{\widetilde{\varepsilon}} \psi_1^{*}, 
    		 \cdots, 
    		 \partial_{\widetilde{\varepsilon}}^{N-1} \psi_1^{*} 
    	\end{pmatrix}^{T}, \quad
    	Y_3    = \lim_{\widetilde{\varepsilon} \to 0}
    	\begin{pmatrix}
    		\chi_1^{*}, 
    		 \partial_{\widetilde{\varepsilon}} \chi_1^{*}, 
    		 \cdots, 
    		 \partial_{\widetilde{\varepsilon}}^{N-1} \chi_1^{*} 
    	\end{pmatrix}^{T}.
    \end{align*}
    \begin{figure}[tbh]
    	\centering
    	\subfigure[]{\includegraphics[height=5.8cm,width=5.8cm]{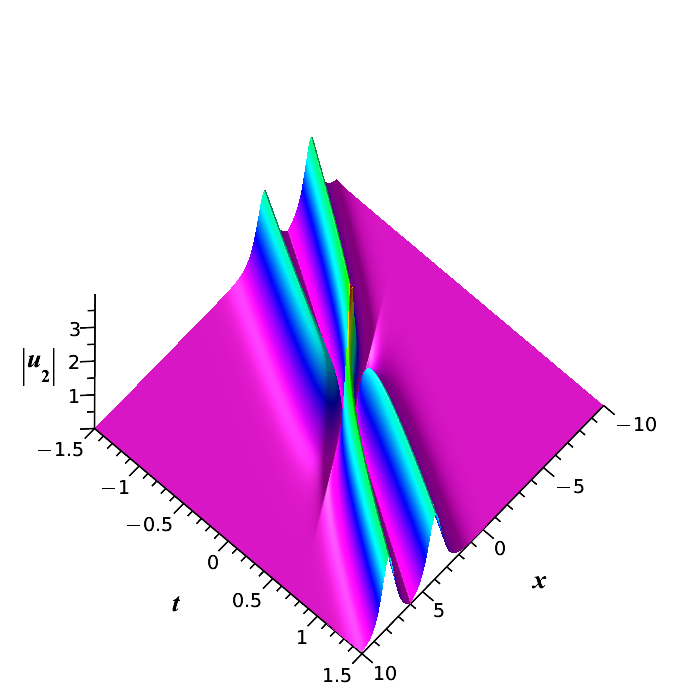}}
    	\subfigure[]{\includegraphics[height=4.8cm,width=4.8cm]{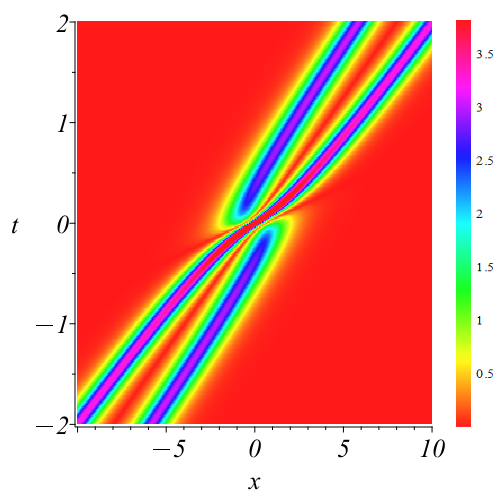}}
    	\caption{The evolution of a two-positon \eqref{pu2} with $a=1, c_1=1, c_2=\frac{1}{10}, c_3=\frac{1}{100}$ of the vector complex modified KdV equaiton on (x,t)-plane.The Panel (a) is the 3D-plot and the Panel (b) is the density plot.}
    	\label{p2}
    \end{figure} 
    When $N$=1, we can get one positon solution which is same as \eqref{c5}--\eqref{c7}. When $N$=2, we can get the two-positon as follows
    \begin{align}
    	&u[2] = -192 a c_1 \frac{
    		\frac{1}{2} A \sqrt{k} \,\mathrm{sech}\left( 8a^3 t - 2a x + \frac{1}{2} \ln k \right) - \frac{1}{24}
    	}{
    		1152 k B + \sqrt{k} \cosh\left( 16a^3 t - 4a x + \frac{1}{2} \ln k \right)
    	},\label{pu2}\\
    	&v[2] = -192 a c_2 \frac{
    		\frac{1}{2} A \sqrt{k} \,\mathrm{sech}\left( 8a^3 t - 2a x + \frac{1}{2} \ln k \right) - \frac{1}{24}
    	}{
    		1152 k B + \sqrt{k} \cosh\left( 16a^3 t - 4a x + \frac{1}{2} \ln k \right)
    	},\\
    	&w[2] = -192 a c_3 \frac{
    		\frac{1}{2} A \sqrt{k} \,\mathrm{sech}\left( 8a^3 t - 2a x + \frac{1}{2} \ln k \right) - \frac{1}{24}
    	}{
    		1152 k B + \sqrt{k} \cosh\left( 16a^3 t - 4a x + \frac{1}{2} \ln k \right)
    	}.
    \end{align}
    \begin{align}
    		A &= a^3 t - \frac{1}{12} a x - \frac{1}{24}, \quad k = c_1^2 + c_2^2 + c_3^2, \\
    		B &= a^6 t^2 - \frac{1}{6} a^4 t x + \frac{1}{144} a^2 x^2 + \frac{1}{1152}
    \end{align}
    We can see that the three components maintain the ratio $u\left[ 2 \right] :v\left[ 2 \right] :w\left[ 2 \right] =c_1:c_2:c_3$. When $N$=2,the bounded two-positon of vcmKdV equation can be ploted with $a=1, c_1=1, c_2=\frac{1}{10},c_3=\frac{1}{100}$, and the density plot of two-positon is also given, see Fig.\ref{p2}. It shows that the two postions are slowly decreasing and the shape and amplitude keeps unchanged after mutual collision. But there is a ‘‘phase shift’’ during the colliding and the amplitude reached its maximum.
    
    When $N$=3, we can get the bounded three-position with $a=1, c_1=\frac{1}{10}, c_2=\frac{1}{10}, c_3=1$, the graph of three-positon and the density plot are presented, see Fig. \ref{p3}. The three-positon also have the features similar to two-positon, the three positons keep unchanged before and after mutual collision and the amplitude reached its maximum at '$t=0$'.
    \begin{figure}[tbh]
    	\centering
    	\subfigure[]{\includegraphics[height=6.8cm,width=6.8cm]{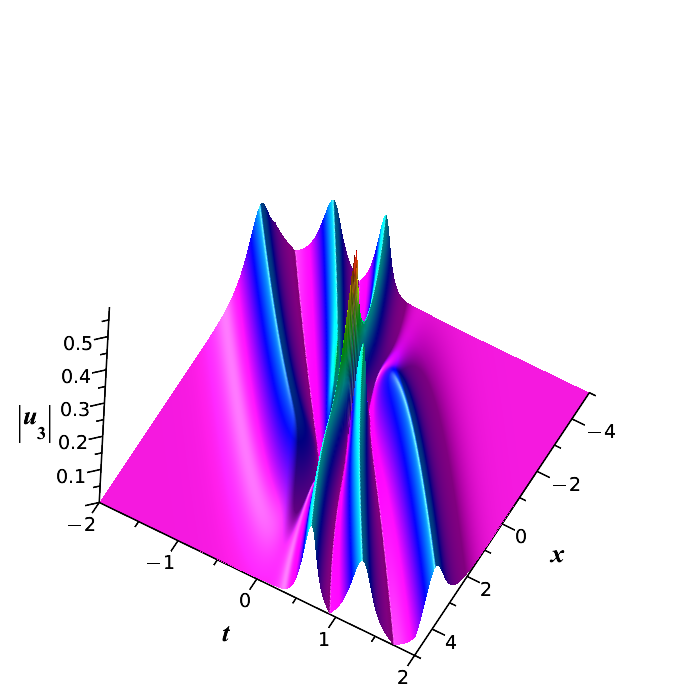}}
    	\subfigure[]{\includegraphics[height=4.8cm,width=4.8cm]{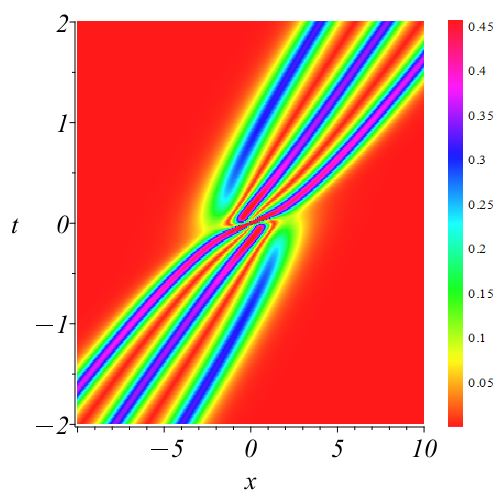}}
    	\caption{The evolution of a three-positon $|u[3]|$ with $a=1, c_1=\frac{1}{10}, c_2=\frac{1}{10}, c_3=1$ of the vector complex modified KdV equaiton on (x,t)-plane. The Panel (a) is the 3D-plot and the Panel (b) is the density plot.}
    	\label{p3}
    \end{figure} 
    
    \subsection{Rogue wave solution}
    To get the rogue wave solutions for Eq. \eqref{b1}-\eqref{b2}, we investigate the initial solution \eqref{d1} with a special eigenvalue
    \begin{equation}
    	\zeta_1(\varepsilon) = \mathrm{i} d + 2\mathrm{i} \varepsilon^2 - \dfrac{\sqrt{6}}{2} d,
    \end{equation}
    where $\varepsilon$ is a small complex parameter,we can obtain
    \begin{align}
    	\varPsi_1(x,t,\varepsilon)= 
    	\begin{pmatrix}
    		e^{\frac{\sqrt{6}}{2} \i d x} & 0 & 0 & 0 \\
    		0 & e^{-\frac{\sqrt{6}}{2} \i d x} & 0 & 0 \\
    		0 & 0 & e^{-\frac{\sqrt{6}}{2} \i d x} & 0 \\
    		0 & 0 & 0 & e^{-\frac{\sqrt{6}}{2} \i d x} \\
    	\end{pmatrix}
    	\begin{pmatrix}
    		\dfrac{
    			 B_1 e^{A_1} 
    			- B_2 e^{-A_1}
    		}{2\varepsilon} \\[3ex]
    		\dfrac{
    			a \left( e^{A_1} - e^{-A_1} \right) 
    			- 2\varepsilon \left( a_1 \alpha_2 + a_2 \alpha_1 \right) e^{A_2}
    		}{2\varepsilon} \\[3ex]
    		\dfrac{
    			a_1 \left( e^{A_1} - e^{-A_1} \right) 
    			+ 2\varepsilon a \alpha_2 e^{A_2}
    		}{2\varepsilon} \\[3ex]
    		\dfrac{
    			a_2 \left( e^{A_1} - e^{-A_1} \right) 
    			+ 2\varepsilon a \alpha_1 e^{A_2}
    		}{2\varepsilon} \\
    	\end{pmatrix},
    \end{align}
    where
    \begin{equation*}
    	A_1 = \xi_1 t + \upsilon_1 x + \sum_{i=1}^N m_i \varepsilon^{2i-1},\quad
    	A_2 = \xi_2 t + \upsilon x + \dfrac{\sqrt{6}}{2} \i d x,
    \end{equation*}
    and
    \begin{align*}
    	&\xi_1 = 12\varepsilon \sqrt{\varepsilon^2 + d} \left[ \i\sqrt{6} d \left( 2\varepsilon^2 + d \right) + \dfrac{8}{3}\varepsilon^4 + \dfrac{8}{3}\varepsilon^2 - 2d^2 \right], \\
    	&\xi_2 = -4\i \left( \i d + 2\i \varepsilon^2 - \dfrac{\sqrt{6}}{2} d \right) 
    	\left[ \i\sqrt{6} d \left( 2\varepsilon^2 + d \right) + 4\varepsilon^2 + 4d \varepsilon^2 - \dfrac{d^2}{2} \right], \\
    	&B_1=\left( 2\varepsilon \sqrt{\varepsilon^2 + d} + 2\varepsilon^2 + d \right),\quad
    	B_2=\left( -2\varepsilon \sqrt{\varepsilon^2 + d} + 2\varepsilon^2 + d \right),\\
    	&\upsilon_1 = -2\varepsilon \sqrt{\varepsilon^2 + d}, \quad
    	\upsilon = -\left( d - 2\varepsilon^2 + \dfrac{\sqrt{6}}{2} \i d \right), \quad
    	d = \sqrt{a^2 + a_1^2 + a_2^2}.
    \end{align*}
    The $\alpha_1$ and $\alpha_2$ are different real arbitrary parameters, and $m_i = m_{iR} + \mathrm{i} m_{iI} (i = 1, \dots, N)$. We expand the $\Psi_1(\varepsilon)$ as the following Taylor series
    \begin{align}
    	\Psi_1(\varepsilon) = \Psi_1^{(0)} 
    	+ \frac{1}{2!} \Psi_1^{(2)} \varepsilon^2 
    	+ \frac{1}{4!} \Psi_1^{(4)} \varepsilon^4 
    	+ \cdots 
    	+ \frac{1}{(2N-2)!} \Psi_1^{(2N-2)} \varepsilon^{2N-2} 
    	+ \mathcal{O}(\varepsilon^{2N}).
    \end{align} 
    Then we can obtain the $N$-th order rogue wave solution which has the similar form to the equation \eqref{e2}.
     \begin{equation}
    	u[N] = e^{\mathrm{i}\sqrt{6}\,a\,x}
    	\left( 
    	a + 2\mathrm{i} \frac{\det(M_1)}{\det(M)} 
    	\right), 
    	\quad
    	v[N] = e^{\mathrm{i}\sqrt{6}\,a\,x}
    	\left( 
    	a_1 + 2\mathrm{i} \frac{\det(M_2)}{\det(M)} 
    	\right), 
    	\quad
    	w[N] = e^{\mathrm{i}\sqrt{6}\,a\,x}
    	\left( 
    	a_2 + 2\mathrm{i} \frac{\det(M_3)}{\det(M)} 
    	\right),
    \end{equation}
    where
    \begin{align*}
    	M = \bigl( M_{ij} \bigr)_{1 \le i,j \le N},\quad
    	M_1 = \begin{pmatrix}
    		M & Y_1 \\
    		X_1 & 0 
    	\end{pmatrix},\quad 
    	M_2 = \begin{pmatrix}
    		M & Y_2 \\
    		X_1 & 0 
    	\end{pmatrix},\quad 
    	M_3 = \begin{pmatrix}
    		M & Y_3 \\
    		X_1 & 0 
    	\end{pmatrix},
    \end{align*}
    and
    \begin{align}
    	M^{[i,j]} = \lim_{\varepsilon \to 0 \atop \tilde{\varepsilon} \to 0}
    	\left[
    	\frac{1}{2(i-1)!} \cdot \frac{1}{2(j-1)!}
    	\cdot \frac{\partial^{2(i-1)}}{\partial \varepsilon^{2(i-1)}}
    	\cdot \frac{\partial^{2(j-1)}}{\partial \tilde{\varepsilon}^{2(j-1)}}
    	\left( 
    	\frac{
    		\Psi_1^\dagger(\tilde{\varepsilon}) \Psi_1(\varepsilon)
    	}{
    		\zeta_1(\varepsilon) - \zeta_1^*(\tilde{\varepsilon})
    	}
    	\right)
    	\right],
    \end{align}
    \begin{align*}
    	X_1 &= \lim_{\varepsilon \to 0} \exp\left( \dfrac{\sqrt{6}}{2} \i d x \right) 
    	\begin{pmatrix}
    		\phi_1,  
    		\dfrac{\partial^2}{\partial \varepsilon^2} \phi_1,  
    		\cdots,  
    		\dfrac{\partial^{2N-2}}{\partial \varepsilon^{2N-2}} \phi_1
    	\end{pmatrix}, \quad
    	Y_1 = \lim_{\widetilde{\varepsilon} \to 0} \exp\left( - \dfrac{\sqrt{6}}{2} \i d x \right) 
    	\begin{pmatrix}
    		\varphi_1^*,  
    		\dfrac{\partial^2}{\partial \widetilde{\varepsilon}^2} \varphi_1^*,  
    		\cdots,  
    		\dfrac{\partial^{2N-2}}{\partial \widetilde{\varepsilon}^{2N-2}} \varphi_1^*
    	\end{pmatrix}^{\! \! T} \\
    	Y_2 &= \lim_{\widetilde{\varepsilon} \to 0} \exp\left( - \dfrac{\sqrt{6}}{2} \i d x \right) 
    	\begin{pmatrix}
    		\psi_1^*,  
    		\dfrac{\partial^2}{\partial \widetilde{\varepsilon}^2} \psi_1^*, 
    		\cdots, &
    		\dfrac{\partial^{2N-2}}{\partial \widetilde{\varepsilon}^{2N-2}} \psi_1^*
    	\end{pmatrix}^{\! \! T}, \quad
    	Y_3 = \lim_{\widetilde{\varepsilon} \to 0} \exp\left( - \dfrac{\sqrt{6}}{2} \i d x \right) 
    	\begin{pmatrix}
    		\chi_1^*,  
    		\dfrac{\partial^2}{\partial \widetilde{\varepsilon}^2} \chi_1^*,  
    		\cdots,  
    		\dfrac{\partial^{2N-2}}{\partial \widetilde{\varepsilon}^{2N-2}} \chi_1^*
    	\end{pmatrix}^{\! \! T}.
    \end{align*}
    The first order rogue wave solution can be expressed as
    \begin{align}
    	&u\left[1\right] = e^{\i \sqrt{6} d x} \left( a \dfrac{G_1 + G_2}{G_1} + \left( a_1 \alpha_2 + a_2 \alpha_1 \right) \dfrac{G_3}{G_1} \right),\\
    	&v\left[1\right] = e^{\i \sqrt{6} d x} \left( a_1 \dfrac{G_1 + G_2}{G_1} - \alpha_2 \dfrac{G_3}{G_1} \right),\\ 
    	&w\left[1\right] = e^{\i \sqrt{6} d x} \left( a_2 \dfrac{G_1 + G_2}{G_1} - \alpha_1 \dfrac{G_3}{G_1} \right),
    \end{align}
    where
    \begin{align*}
    	G_1 &= \i\left[ -a^2\left( \alpha_{1}^{2} + \alpha_{2}^{2} \right) - \left( a_1\alpha_2 + a_2\alpha_1 \right)^2 \right] e^{-28d^3 t - 2d x} + 12d^{\dfrac{5}{2}} \left[ \sqrt{6} \left( m_{1}^{*} - m_1 \right) d^2 t + 2\i \left( m_1 + m_{1}^{*} \right) \left( d^2 t + \dfrac{x}{12} \right) \right] \\
    	&+ 12d^{\dfrac{9}{2}} \left[ \sqrt{6} \left( m_{1}^{*} - m_1 \right) + 2\i \left( m_1 + m_{1}^{*} \right) \right] t + 2\i \left( m_1 + m_{1}^{*} \right) d^{\dfrac{3}{2}} + 2\i d^{\dfrac{5}{2}} \left( m_1 + m_{1}^{*} \right) x \\
    	&+ \i a^2 \left( -1440 d^5 t^2 - 96 d^3 t x - 4 d x^2 - m_1 m_{1}^{*} \right) - 144\i d^2 \left( d^2 + a_{1}^{2} + a_{2}^{2} \right) t^2 - 96\i d^3 \left[ \left( d^2 + a_{1}^{2} + a_{2}^{2} \right) x + d \right] t \\ 
    	&- 4\i d \left[ 1 + \left( d^2 + a_{1}^{2} + a_{2}^{2} \right) x^2 + 2d x \right],\\
    	G_2 &= -96d^{\dfrac{9}{2}} \left[ \dfrac{\sqrt{6}}{2} \left( m_{1}^{*} - m_1 \right) + \i \left( m_1 + m_{1}^{*} \right) \right] t - 8\i d^{\dfrac{5}{2}} \left( m_1 + m_{1}^{*} \right) x - 96 \sqrt{6} d^4 t \\
    	&+ 5760\i d^2 \left[ d^5 t^2 + \dfrac{1}{15} \left( d^3 x + \dfrac{1}{2} d^2 \right) t + \dfrac{1}{360} d x^2 + \dfrac{1}{1440} m_1 m_{1}^{*} + \dfrac{1}{360} x \right],\\
    	G_3& = 8 \left[ \i d^{\dfrac{5}{2}} x + 6 \left( 2\i + \sqrt{6} \right) d^{\dfrac{9}{2}} - \dfrac{\i m_1 d^2}{2} + \i d^{\dfrac{3}{2}} \right] e^{-\left( 14 + 3 \sqrt{6}  \i \right) d^3 t - d x}.
    \end{align*}
    \begin{figure}[tbh]
    	\centering
    	\subfigure[]{\includegraphics[height=4.8cm,width=4.8cm]{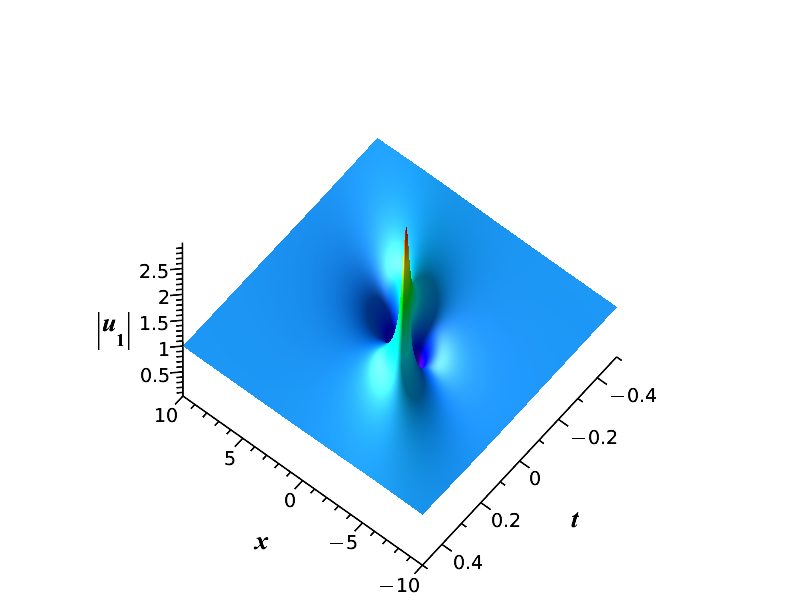}}
    	\subfigure[]{\includegraphics[height=4.8cm,width=4.8cm]{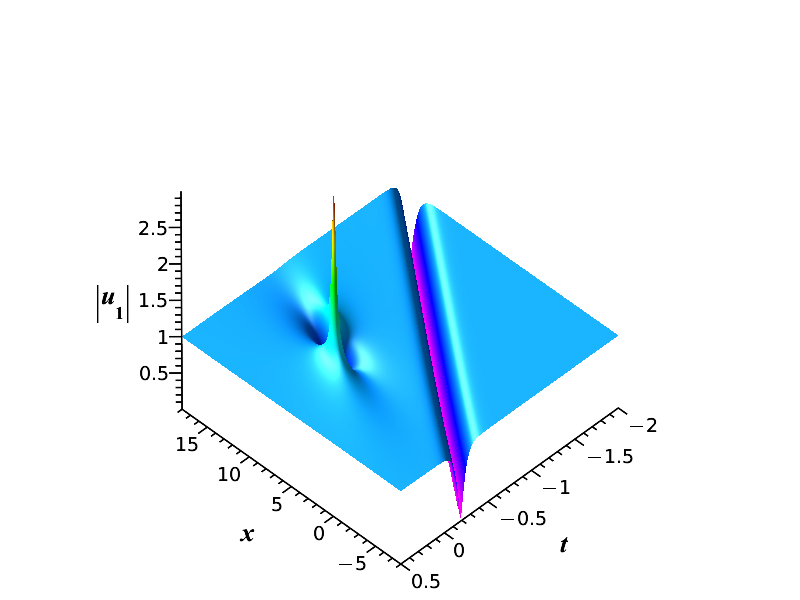}}
    	\subfigure[]{\includegraphics[height=4.8cm,width=4.8cm]{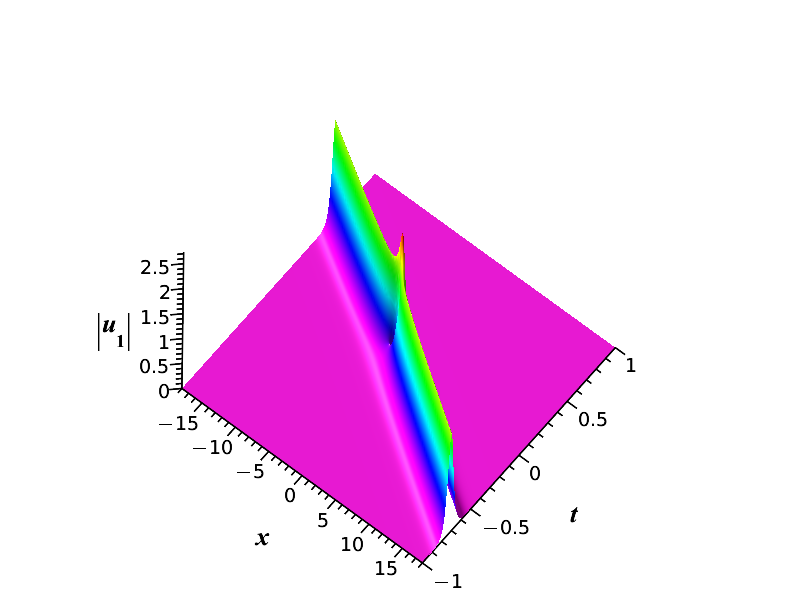}}
    	\caption{Panel (a) is the Peregrine rogue wave with $\alpha_1=0, \alpha_2=0, a_1=0, a_2=0, a=1, m_1=0$.Panel (b) is a bounded rogue wave coexist with a dark soliton with $\alpha_1=\frac{1}{1000}, \alpha_2=0, a_1=0, a_2=0, a=1, m_1=10+10\i$.Panel (c) is the bounded rogue wave interact with a bright soliton with $\alpha_1=10, \alpha_2=0, a_1=1, a_2=1, a=0, m_1=0$.}
    	\label{r1}
    \end{figure} 
    \begin{figure}[tbh]
    	\centering
    	\subfigure[]{\includegraphics[height=6.8cm,width=6.8cm]{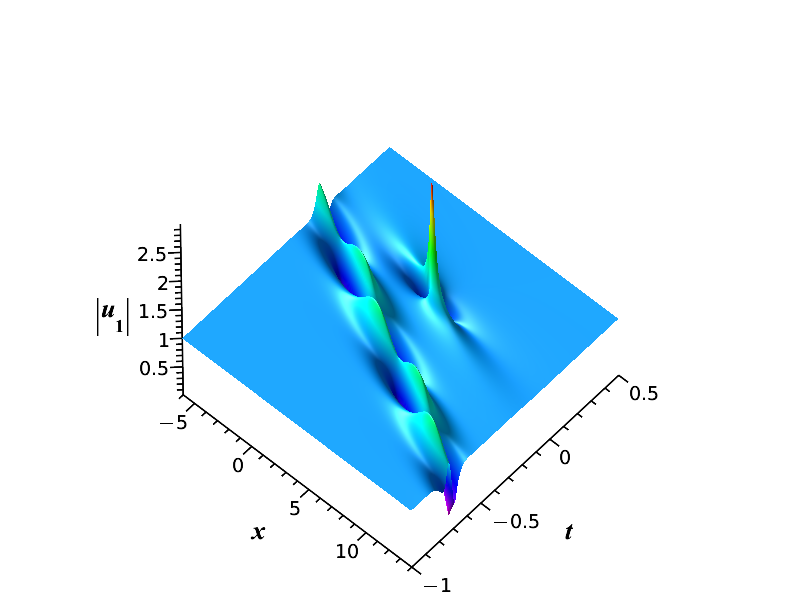}}
    	\subfigure[]{\includegraphics[height=4.8cm,width=4.8cm]{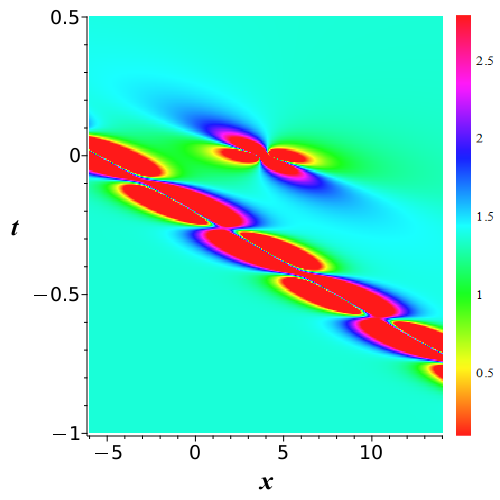}}
    	\caption{Panel(a) is the a bounded rogue wave coexist with a breather-breather wave with $\alpha_1=\frac{1}{100}, \alpha_2=\frac{1}{100}, a_1=1, a_2=0, a=1, m_1=10$.Panel (b) is the density plot of (a).}
    	\label{r2}
    \end{figure} 
    The dynamics of the one-order rogue wave can also be analyzed into the different cases similar to the breather solutions.
    
    (1)If $\alpha_1=0, \alpha_2=0$, the three components have the ratio $u\left[ 1 \right] :v\left[ 1 \right] :w\left[ 1 \right] =c_1:c_2:c_3$, and the standard Peregrine rogue wave can be presented, see Fig. \ref{r1} (a).
    
    (2)If $\alpha_1 \ne 0$ and $\alpha_2 \ne 0$ are not both zero, we can obtain the bounded dark-bright solitons coexist with rogue wave,see Fig.\ref{r1}. If $a \ne 0, \left( a_1\alpha _2+a_2\alpha _1 \right)=0$, we can get the bounded rogue wave coexist with a dark soliton. If $a = 0, \left( a_1\alpha _2+a_2\alpha _1 \right) \ne 0$, we can get the bounded rogue wave interact with a bright soliton.
    
    (3)If $\alpha_1 \ne 0$ and $\alpha_2 \ne 0$ are not both zero, $a \ne 0$ , $\left( a_1\alpha _2+a_2\alpha _1 \right) \ne 0$ and with $p_{1R}=0$, we can obtain a rogue wave coexist with a breather-breather solution, see Fig.\ref{r2}.

    \section{Conclusions}\label{5}
    In this paper, we have investigated the determinant representation of the $N$-fold Darboux transformation for the vector complex modified KdV equation. Using this framework, we have derived $N$-bright-bright-bright soliton solutions, $N$-dark-bright-bright soliton solutions, $N$-breather solutions, $N$-positon solutions, and $N$th-order rogue wave solutions. With the exception of the Akhmediev breather, which can develop singularities, all derived solutions are globally bounded. The dynamics of the soliton solutions are presented and we give the asymptotic analysis for bounded bright-bright-bright soliton. It is worth to noting that there is a periodic oscillatory waves coexist or interact with the soliton solutions. The bounded Akhmediev breather, collapsed Akhmediev breather and the regular breather, alone with the breather interact with dark-bright solitons, breathers and soliton-breather-breather waves are shown. The positons, rogue wave and rogue wave coexist or interact with dark-bright solitons and breather-breather waves are exhibited.

    \section*{Acknowledgments}
    {This work is supported by the Zhejiang Provincial Natural Science Foundation of China under Grant No.LY24A010002, the National Natural Science Foundation of
    	China under Grant No. 12171433, the Natural Science Foundation of Ningbo under Grant No. 2023J126.}\\
    
    \noindent\textbf{Compliance with ethical standards}\\
    
    \noindent \textbf{Ethical Statement}\\ {\small Authors declare that they comply with ethical standards. } \\
    
    \noindent \textbf{Conflict of interest}\\ {\small Authors declare that they have no conflict of interest.} \\
    
    \noindent\textbf{Data Availability Statement}\\
    {\small The data that support the findings of this article are available from the corresponding author, upon reasonable request. } \\
    \vspace{-0.5cm}


\begin{thebibliography}{}
    	%
    	%
    	\bibitem{Clifford1967} 
    	Clifford S. Gardner, John M. Greene, Martin D. Kruskal, et al. Method for Solving the Korteweg-de Vries Equation. Phys. Rev. Lett. 19(1967) 1095-1097.
    	
    	\bibitem{Ablositz1981}
    	M. J. Ablowitz and H. Segur. Solitons and the Inverse Scattering Transform. SIAM Philadelphia (1981).
    	
    	\bibitem{Pan1998}  
    	Pan Z., Zheng K., Zhao S. An algebraic method for solving the KdV equation (I). Two-parameter solution family. Chaos. Soliton. Fract. 9(1998) 1733-1737.
    	
    	\bibitem{Dejak2006} 
    	Dejak, SI, Sigal, IM. Long-time dynamics of KdV solitary waves over a variable bottom. Commun Pur. Appl. Math. 59(2006) 869-905.
    	
    	\bibitem{Stephen2011}
    	Stephen C. Anco, Nestor Tchegoum Ngatat, Mark Willoughby. Interaction properties of complex mKdV solitons. Physica D 240(2011) 1378-1394.
    	
    	
    	\bibitem{Das2024}
    	Dulal Chandra Das, Samiran Das, Rekha Kalita. Modified Korteweg-de Vries solitons with quartic nonlinearity in a dusty plasma. Phys. Scripta. 99(2024) 055266.
    	
    	\bibitem{Schief1995}
    	Schief, Wk. An Infinite Hierarchy of symmetries associated with hyperbolic surfaces. Nonlinearity 8(1995) 1-9.
    	
    	\bibitem{GESZTESY1991}
    	Gesztesy F, Schweiger W, Simon B. Commutation Methods Applied to the mKdV-equation. T.Am.Math.Soc. 324(1991) 465-525.
    	
    	\bibitem{Nagatani1998}
    	Nagatani T. Modified KdV equation for jamming transition in the continuum models of traffic. Physica A. 261(1998) 599-607.
    	
    	\bibitem{He2005}
    	He J., Chen S. Hamiltonian Formalism of mKdV Equation with Non-vanishing Boundary Values. Commun. Theor. Phys. 44(2005) 321-325.
    	
    	\bibitem{Salas2010}
    	Salas, Alvaro H. Exact solutions to mKdV equation with variable coefficients. Appl.Math.Comput. 216(2010) 2792-2798.
    	
    	
    	
    	\bibitem{Wu2017}
    	Wu J., Geng X. Inverse Scattering Transform and Soliton Classification of the Coupled Modified Korteweg-de Vries Equation. Commun.  Nonlinear. Sci. 53(2017) 83-93.
    	
    	\bibitem{Zhang2020}
    	Zhang G., Yan Z. Focusing and defocusing mKdV equations with nonzero boundary conditions: Inverse scattering transforms and soliton interactions. Physica D 410(2020) 132521.
    	
    	\bibitem{Liu2021}
    	Liu N., Guo B. Asymptotics of solutions to a fifth-order modified Korteweg-de Vries equation in the quarter plane. Anal. Appl. 19(2021) 575-620.
    	
    	\bibitem{Gao2021}
    	Gao X., Guo Y. Shan W., et al. In the Atmosphere and Oceanic Fluids: Scaling Transformations, Bilinear Forms, Bäcklund Transformations and Solitons for A Generalized Variable-Coefficient Korteweg-de Vries-Modified Korteweg-de Vries Equation. China Ocean Eng. 35(2021) 518-530.
    	
    	\bibitem{Roy2022}
    	Roy, Subrata, Raut, Santanu, Kairi, Rishi Raj. Nonlinear analysis of the ion-acoustic solitary and shock wave solutions for non-extensive dusty plasma in the framework of modified Korteweg-de Vries-Burgers equation. Pramana-J.Phys. 96(2022) 1-13.
    	
    	\bibitem{Chen2023}
    	Chen M., Fan E., He J. Riemann-Hilbert approach and the soliton solutions of the discrete mKdV equations. Chaos. Soliton. Fract. 168(2023) 113209.
    	
    	\bibitem{Vladimir2023}
    	Vladimir I. Kruglov, Houria Triki. Interacting Solitons, Periodic Waves and Breather for Modified Korteweg–de Vries Equation. Chinese Phys. Lett. 40(2023) 19-23.
    	
    	\bibitem{Liu2024}
    	Liu X., Lu B., Zhang, Da-Jun. New type of solutions for the Modified Korteweg-de Vries equation. Appl. Math. Lett. 159(2025) 109288.
    	
    	\bibitem{Muhammad2024}
    	Muhammad, Jan, Younas, Usman, Hussain, Ejaz, et al. Analysis of fractional solitary wave propagation with parametric effects and qualitative analysis of the modified Korteweg-de Vries-Kadomtsev-Petviashvili equation. Sci. Rep-UK. 14(2024) 1-16.
    	
    	
    	
    	\bibitem{Zhaqilao2013}
    	Zha Q. Nth-Order Rogue Wave Solutions of the Complex Modified Korteweg-de Vries Equation. Phys. Scripta 87(2013) 065401.
    	
    	\bibitem{Huang2021}
    	Huang L., Lv N. Soliton Molecules, Rational Positons and Rogue Waves for the Extended Complex Modified KdV Equation. Nonlinear Dynam. 105(2021) 3475-3487.
    	
    	
    	\bibitem{Ismail2008}
    	Ismail M. S. Numerical solution of complex modified Korteweg-de Vries equation by Petrov-Galerkin method. Appl. Math. Comput. 202(2008) 520-531.
    	
    	\bibitem{Uddin2009}
    	Uddin, Marjan, Haq, Sirajul, Siraj-ul-Islam. Numerical solution of complex modified Korteweg-de Vries equation by mesh-free collocation method. Comput.Math.Appl. 58(2009) 566-578.
    	
    	\bibitem{Tao2021}
    	Xu T., Zhang G., Wang L., et al. Numerical Simulation of the Soliton Solutions for a Complex Modified Korteweg-de Vries Equation by a Finite Difference Method. Commun. Theor. Phys. 73(2021) 41-51.
    	
    	\bibitem{Yuan2023}
    	Yuan F. The semi-rational solutions of the (2+1)-dimensional cmKdV equations. Nonlinear.Dynam. 111(2023) 733-744.
    	
    	\bibitem{Wangwazwaz2022}
    	Wang G., Wazwaz, Abdul-Majid. A New (3+1)-Dimensional KdV Equation and mKdV Equation With Their Corresponding Fractional Forms. Fractals 30(2022) 2250081.
    	
    	\bibitem{Bai2024}
    	Bai Q., Li X., Zhao Q. Evolution of Dispersive Shock Waves to the Complex Modified Korteweg-de Vries Equation With Higher-Order Effects. Chaos Soliton Fract. 182(2024) 114731.
    	
    	\bibitem{Rao2024}
    	Rao, J., Mihalache, Dumitru, He J. Multiple Solitons and Breathers on Periodic Backgrounds in the Complex Modified Korteweg-de Vries Equation. Appl. Math. Lett. 160(2025) 109308.
    	
    	\bibitem{Zhao2024}
    	Zhao Y., Zhu D. A Riemann-Hilbert Approach for the Focusing and Defocusing mKdV Equation With Asymmetric Boundary Conditions in Few-Cycle Pulses. Eur. Phys. J. Plus. 139(2024) 603.
    	
    	\bibitem{Song2024}
    	Song C., Liu D., Ma L. Soliton Solutions of a Novel Nonlocal Hirota System and a Nonlocal Complex Modified Korteweg-de Vries Equation. Chaos Soliton Fract. 181(2024) 114707.
    	
    	\bibitem{Xu2025}
    	Xu B., Zhang S. Exact Solutions of a Local Fractional Nonisospectral Complex mKdV Equation Based on Riemann-Hilbert Method With Time-Varying Spectrum. Alex. Eng. J. 115(2025) 564-576.
    	
    	
    	
    	\bibitem{Liu2016}
    	Liu H., Geng X. Initial-Boundary Problems for the Vector Modified Korteweg-de Vries Equation via Fokas Unified Transform Method. J. Math. Anal. Appl. 440(2016) 578-596.
    	
    	\bibitem{Wang2020}
    	Wang X., Han B., Application of the Riemann-Hilbert Method to the Vector Modified Korteweg-de Vries Equation. Nonlinear Dynam.  99(2020) 1363-1377.
    	
    	\bibitem{Fenchenko2018}
    	Fenchenko, V. Khruslov, E. Nonlinear Dynamics of Solitons for the Vector Modified Korteweg-de Vries Equation. J. Math. Phys. Anal. Geo. 14(2018) 153-168.
    	
    	\bibitem{Matveev1992}
    	Matveev, V. B. Positon positon and soliton positon Collisions-KdV case. Phys. Lett. A  166(1992) 209-212.
    	
    	\bibitem{Stahlhofen1992}
    	Stahlhofen A.A. Positions of the Modified Korteweg-de Vries Equation. Ann. Phys-Berlin. 1(1992) 554-569.
    	
    	\bibitem{Beutler}
    	Beutler, R. Positon Solutions of the Sine-Gordon Equation. J. Math. Phys. 34(1993) 3098-3109.
    	
    	\bibitem{Hu2021}
    	Hu A., Li M., He J. Dynamic of the Smooth Positons of the Higher-Order Chen-Lee-Liu Equation. Nonlinear Dynam. 104(2021) 4329-4338.
    	
    	\bibitem{Shan2024}
    	Shan J., Li M. The Dynamic of the Positons for the Reverse Space-Time Nonlocal Short Pulse Equation. Physica D 470(2024) 134419.
    	
    	\bibitem{Rahman2025}
    	Rahman, Riaz Ur, Li Z., He J. Magnetic Wave Dynamics in Ferromagnetic Thin Films: Interactions of Solitons and Positons in Landau-Lifshitz-Gilbert Equation. Physica D  479(2025) 134719.
    	
    	\bibitem{Matveev2002}
    	Matveev, V. B. Positons: Slowly Decreasing Analogues of Solitons. Adv. Theor. Math. Phys. 131(2002) 483-497.
    	
    	\bibitem{Akhmediev2009}
    	Akhmediev, N. Soto-Crespo, J. M. Ankiewicz A. Extreme waves that appear from nowhere: On the nature of rogue waves.Phys.Lett.A. 373(2009) 2137-2145.
    	
    	\bibitem{Akhmediev2023}
    	Akhmediev, Nail. Waves That Appear From Nowhere. Proc. Roy. Soc. Victoria 135(2023) 64-68.
    	
    	\bibitem{Kaur2022}
    	Kaur, Rajneet, Singh, Kuldeep, Saini, N. S. Electron Acoustic Rogue Waves in Earth's Magnetosphere. J. Astrophys. Astr. 43(2022) 62.
    	
    	\bibitem{Da2020}
    	Da Silva Mendes, Saulo Matusalem. On the Statistics of Oceanic Rogue Waves in Finite Depth: Exceeding Probabilities, Physical Constraints and Extreme Value Theory. The University Of North Carolina At Chapel Hill  2020.
    	
    	\bibitem{Qi2025}
    	Qi J., Wang D. High-Order Optical Rogue Waves in Two Coherently Coupled Nonlinear Schrödinger Equations. Physica D 472(2025) 134538.
    	
    	\bibitem{Kengne2023}
    	Kengne, Emmanuel. Manipulating Matter Rogue Waves in Bose-Einstein Condensates Trapped in Time-Dependent Complicated Potentials. Nonlinear Dynam. 111(2023) 11497-11520.
    	
    	\bibitem{Haefner2023}
    	Haefner, Dion, Gemmrich, Johannes, Jochum, Markus. Machine-Guided Discovery of a Real-World Rogue Wave Model. P. Natl. Acad. Sci. USA. 120(2023) e2306275120.
    	
    	\bibitem{Efimov2010}
    	Efimov, V. B. Ganshin, A. N. Kolmakov, G. V., et al. Rogue Waves in Superfluid Helium. Eur. Phys. J-Spec. Top. 185(2010) 181-193.
    	
    	\bibitem{Zhen2023}
    	Zhen Y., Rogue Waves on the Periodic Background in the Complex Modified KdV Equation With Higher-Order Effects. Wave Motion 123(2023) 103209.
    	
    	\bibitem{Li2020}
    	Li R., Geng X. Rogue Periodic Waves of the Sine-Gordon Equation. Appl. Math. Lett. 102(2020) 106147.
    	
    	\bibitem{Ying2021}
    	Yang Y., Gao Y., Yang H. Analysis of the Rogue Waves in the Blood Based on the High-Order NLS Equations With Variable Coefficients. Chinese Phys. B 30(2021) 110202.
    	
    	\bibitem{Wang2024}
    	Wang C., Wang L., Li C. Rogue Waves and Their Dynamics in the Ito's System With the Nonzero Constant Background. Nonlinear Dynam. 112(2024) 6547-6559.
    	
    	\bibitem{Terng2000}
    	Terng, C. L. Uhlenbeck, K. Bäcklund.  Transformations and loop group actions. Commun. Pur. Appl. Math. 53(2000) 1-75.
    	
    	\bibitem{Wang2022}
    	Wang X., Li C. Solitons, breathers and rogue waves in the coupled nonlocal reverse-time nonlinear Schrödinger equations. J. Geom. Phys. 180(2022) 104619.
    \end{thebibliography}
\end{document}